\newcommand{\distR}{\mathrm{dist}_R}
\newcommand{\pth}[1]{\left( #1 \right)}
\newcommand{\qth}[1]{\left[ #1 \right]}
\newcommand{\sth}[1]{\left\{ #1 \right\}}
\newcommand{\floor}[1]{{\left\lfloor {#1} \right \rfloor}}
\newcommand{\eg}{e.g.\xspace}
\newcommand{\ie}{i.e.\xspace}
\newcommand{\iid}{i.i.d.\xspace}
\newtheorem{theorem}{Theorem}
\newtheorem{lemma}{Lemma}
\newtheorem{prop}{Proposition}
\theoremstyle{definition}
\newtheorem{remark}{Remark}
\newtheorem{question}{Question}
\newcommand{\lk} { \lim_{k \rightarrow \infty}}
\title{Equivalence of additive-combinatorial linear inequalities for Shannon entropy and differential entropy}
\author{Ashok Vardhan Makkuva and Yihong Wu\thanks{Ashok Vardhan Makkuva is with
the Department of ECE and the Coordinated Science Lab, University of Illinois at Urbana-Champaign, Urbana, IL, email: \texttt{makkuva2@illinois.edu}. Yihong Wu is with the Department of Statistics, Yale University, New Haven CT 06511,  email: \texttt{yihong.wu@yale.edu}.}}
\date{\today}
  \renewcommand{\footnoterule}{%
  \kern -3pt
  \hrule width \textwidth height 1pt
  \kern 2pt
}
\newcommand{\Z}{\mathbb{Z}}
\newcommand{\TV}{d_\mathrm{TV}}
\newcommand{\Mod}[1]{\ (\text{mod}\ #1)}
\newcommand{\add}{\sum_{j=1}^m}
\newcommand{\torus}{\mathbb{T}}
\newcommand{\diverge}{\to\infty}
\newcommand{\reals}{\mathbb{R}}
\newcommand{\naturals}{\mathbb{N}}
\newcommand{\integers}{\mathbb{Z}}
\newcommand{\Expect}{\mathbb{E}}
\newcommand{\prob}[1]{\mathbb{P}\left[#1\right]}
\newcommand{\pprob}[1]{\mathbb{P}[#1]}
\newcommand{\calU}{\mathcal{U}}
\begin{document}
\maketitle


\begin{abstract}
This paper addresses the correspondence between linear inequalities of Shannon entropy and differential entropy for sums of independent group-valued random variables.
We show that any balanced (with the sum of coefficients being zero) linear inequality of Shannon entropy holds if and only if its differential entropy counterpart also holds; moreover, any linear inequality for differential entropy must be balanced.
In particular, our result shows that recently proved differential entropy inequalities by Kontoyiannis and Madiman \cite{KM14} can be deduced from their discrete counterparts due to Tao \cite{Tao10} in a unified manner. Generalizations to certain abelian groups are also obtained. 

Our proof of extending inequalities of Shannon entropy to differential entropy relies on a result of R\'enyi \cite{Renyi59} which relates the Shannon entropy of a finely discretized random variable to its differential entropy and also helps in establishing the entropy of the sum of quantized random variables is asymptotically equal to that of the quantized sum; the converse uses the asymptotics of the differential entropy of convolutions with weak additive noise.

\end{abstract}

\tableofcontents

\section{Introduction and main result}
\subsection{Additive-combinatorial inequalities for cardinality and Shannon entropy}
Over the past few years, the field of additive combinatorics has invited a great deal of mathematical activity; see \cite{TV06} for a broad introduction. An important repository of tools in additive combinatorics is the sumset inequalities, relating the cardinalities of the sumset and the difference set 
$A\pm B=\{a\pm b:a\in A,b\in B\}$ 
to those of $A$ and $B$, where $A$ and $B$ are arbitrary subsets of integers, or more generally, any abelian group.

 One can consider the information-theoretic analogs of these additive combinatoric inequalities by replacing the sets by (independent, discrete, group-valued) random variables and, correspondingly, the log-cardinality by the Shannon entropy. For example, the inequality 
 \[
 \max\{|A|,|B|\} \leq |A+B| \leq |A||B|
 \]
  translates to 
\begin{equation}
\max\left\{H\left(X\right),H\left(Y\right)\right\} \leq H\left(X+Y\right)\leq H\left(X\right)+H\left(Y\right),	
	\label{eq:HHH}
\end{equation}
  which follows from elementary properties of entropy. The motivation to consider these analogs comes from the interpretation that the Shannon entropy 
  \[
  H\left(X\right)\triangleq \sum_x \prob{X=x} \log \frac{1}{\prob{X=x}}
  \]
   of a discrete random variable $X$ can be viewed as the logarithm of the  \emph{effective cardinality} of the alphabet of $X$ in the sense of asymptotic equipartition property (AEP) \cite{cover}, which states that the random vector consisting of $n$ independent copies of $X$ is concentrated on a set of cardinality $\exp(n(H(X)+o(1))$ as $n\diverge$.
 While this observation was fruitful in deducing certain entropy inequality, \eg, Han's inequality \cite{Han78}, directly from their set counterpart cf.~\cite[p.~5]{Ruzsa09}, it has not proven useful for inequalities dealing with sums since the typical set of sums can be exponentially larger than sums of individual typical sets.
Forgoing this soft approach and capitalizing on the submodularity property of entropy,
 in the past few years several entropy inequalities for sums and differences have been obtained \cite{TV05,Lape,Madiman08,Tao10,MK10,MMT10}, such as the sum-difference inequality \cite[Eq.~(2.2)]{Tao10}
 \begin{equation}
	H(X+Y) \leq 3 H(X-Y) - H(X)-H(Y),
	\label{eq:sumdiff}
\end{equation}
which parallels the following (cf., \eg, \cite[Eq.~(4)]{GHR07})
\[
|A+B| \leq \frac{|A-B|^3}{|A||B|}.
\]
 More recently, a number of entropy inequalities for integer linear combinations of independent random variables
 have been obtained in \cite[Appendix E]{DoF.IT}, \eg,
\[
	H(p X + q Y) - H(X+Y) \leq (7 \floor{\log |p|} + 7 \floor{\log |q|} + 2) (2 H(X+Y) - H(X) - H(Y)),
\]
for non-zero integers $p,q$, which are counterparts of results on sum of dilated sets in \cite{bukh08}.

It is worth noting that all of the aforementioned results for Shannon entropy are \emph{linear inequalities} for entropies of weighted sums of independent random variables, which are of the general form:
\begin{equation}
\sum_{i=1}^n \alpha_i H \pth{\sum_{j=1}^m a_{ij} Z_j} \leq 0,
	\label{eq:linear-H}
\end{equation}
 with $a_{ij} \in \integers$, $\alpha_i \in \reals$, $Z_1,\ldots,Z_m$ being independent discrete group-valued random variables.

\subsection{Equivalence of Shannon and differential entropy inequalities}

Recall that the \emph{differential entropy} of a real-valued random vector $X$ with probability density function (pdf) $f_X$ is defined as 
\[
h\left(X\right)=\int f_X(x) \log \frac{1}{f_X(x)} dx.
\]
 Again, in the sense of AEP, $h(X)$ 
can be interpreted as the log-volume of the effective support of $X$ \cite{cover}.
In a similar vein, one can consider similar additive-combinatorial inequalities for differential entropies on Euclidean spaces. 
Recently Kontoyiannis and Madiman \cite{KM14} and Madiman and Kontoyiannis \cite{MK10,MK15} made important progress in this direction by 
showing that while the submodularity property, the key ingredient for proving discrete entropy inequalities, fails for differential entropy, 
several linear inequalities for Shannon entropy nevertheless extend \emph{verbatim} to differential entropy; for example, the sum-difference inequality \prettyref{eq:sumdiff} admits an exact continuous analog \cite[Theorem 3.7]{KM14}:
\begin{equation}
	h(X+Y) \leq 3 h(X-Y) - h(X)-h(Y).
	\label{eq:sumdiff-h}
\end{equation}
These results prompt us to ask the following question, which is the focus of this paper:



\begin{question}
\label{ques:main}
		Do all linear inequalities of the form \prettyref{eq:linear-H} for discrete entropy extend to differential entropies, and vice versa?
		\end{question}

%

 A simple but instructive observation reveals that all linear inequalities for differential entropies are always \emph{balanced}, that is, the sum of all coefficients must be zero. In other words, should 
\begin{equation}
\sum_{i=1}^n \alpha_i h \pth{\sum_{j=1}^m a_{ij} Z_j} \leq 0,	
	\label{eq:linear-h}
\end{equation}
hold for all independent $\reals^d$-valued $Z_j$'s, then we must have $\sum_{i=1}^n \alpha_i=0$. To see this, recall the fact that $h(aZ) = h(Z) + d \log a$ for any $a>0$; in contrast, Shannon entropy is scale-invariant. Therefore, whenever the inequality \prettyref{eq:linear-h} is unbalanced, \ie,  
$\sum_{i=1}^n \alpha_i \neq 0$, scaling all random variables by $a$ and sending $a$ to either zero or infinity leads to a contradiction.
For instance, in \prettyref{eq:HHH}, the left inequality (balanced) extends to differential entropy but the right inequality (unbalanced) clearly does not.

Surprisingly, as we show in this paper, a balanced linear inequality holds for Shannon entropy if and only if it holds for differential entropy, thereby fully resolving \prettyref{ques:main}.  
This result, in a way, demystifies the striking parallel between discrete and continuous entropy inequalities.
In particular, it shows that the results in \cite{KM14,MK15}, which are linear inequalities for mutual information such as $I(X;X+Y)=h(X+Y)-h(Y)$ or Ruzsa distance $\distR(X,Y) \triangleq h(X-Y)- \frac{1}{2}h(X)-\frac{1}{2}h(Y)$ \cite{Ruzsa09,Tao10,KM14}) and hence expressible as balanced linear inequalities for differential entropy, can be deduced from their discrete counterparts \cite{Tao10} in a unified manner. 
While our results establish that all balanced linear inequalities for Shannon entropy extend to differential entropy and vice versa, it is worth pointing out that this does not hold for affine inequalities. Note that non-trivial \emph{affine} inequality for Shannon entropy does not exist simply because one can set all random variables to be deterministic; however, this is not the case for differential entropy. For instance, the following balanced affine inequality
\begin{align}
h(X+Y) \geq \frac{1}{2} \pth{h(X)+h(Y)}+\frac{d}{2} \log 2 \label{eq:Exception}
\end{align}
holds for any independent $\reals^d$-valued random variables $X$ and $Y$, which is a direct consequence of the entropy power inequality (see \cite[Lemma 3.1]{Barron84} for generalizations of \prettyref{eq:Exception}). However, the Shannon entropy analogue of \prettyref{eq:Exception}, replacing all $h$ by $H$, is clearly false (consider deterministic $X$ and $Y$).On the other hand, there exists no unbalanced linear inequality for differential entropy while it's not true for Shannon entropy. Consider for instance, the Shannon entropy inequality
\begin{align*}
H(X+Y) \leq H(X)+H(Y) 
\end{align*}
holds for any independent discrete random variables $X$ and $Y$, which follows directly from the elementary properties of Shannon entropy. However, the differential entropy counterpart, $h(X+Y) \leq h(X)+h(Y)$ can be shown to be false by taking $X$ and $Y$ to be independent Gaussian random variables with zero mean and variance $\frac{1}{2\pi e}$ and $1$ respectively.

To explain our proof that discrete entropy inequalities admit continuous counterparts, we first note that the main tool for proving differential entropy inequalities in 
\cite{MK10,KM14,MK15} is the data processing inequality of mutual information, replacing the \emph{submodularity} of Shannon entropy exploited in \cite{Tao10}.
However, this method has been applied on a case-by-case basis as there seems to be no principled way to recognize the correct data processing inequality that needs to be introduced. 
Instead, to directly deduce a differential inequality from its discrete version, our strategy is to rely on a result due to R\'enyi \cite{Renyi59} which gives the asymptotic expansion of the Shannon entropy of a finely quantized continuous random variable in terms of its differential entropy, namely, 
\begin{equation}
H(\floor{m X}) =  d \log m + h(X) + o(1), \quad m\diverge	
	\label{eq:renyi0}
\end{equation}
 for continuous $\reals^d$-valued $X$.
In fact, this approach has been discussed in \cite{KM14} at the suggestion of a reviewer, where it was noted that differential entropy inequalities can be approximately obtained from their discrete counterparts via this quantization approach, since $H(\floor{m X} + \floor{m Y})$ and $H(\floor{m (X + Y)})$ can only differ by a few bits, which might be further improvable. Indeed, as we shall prove later  in \prettyref{lmm:implmm1}, this entropy difference is in fact vanishingly small, which enables the additive-combinatorial entropy inequalities to carry over exactly from discrete to Euclidean spaces, and, even more generally, for connected abelian Lie groups. 
Interestingly, in addition to bridging the discrete and continuous notion of entropy, R\'enyi's result also plays a key role in establishing the vanishing entropy difference.

In establishing that all linear discrete entropy inequalities follow from their continuous analogs, the following are the two key ideas of our approach: First we show that given any finite collection of discrete $\reals^d$-valued random variables, we can embed them into a high dimensional Euclidean space and project them back to $\reals^d$ such that the Shannon entropy of any linear combinations of the projected random variables is equal to an arbitrarily large multiple of that the given random variables. Next we add independent noise, e.g., Gaussian, with arbitrarily small variance to these projected discrete random variables and relate their Shannon entropy to the differential entropy of their noisy versions. 
Sending the variance to zero and then the dimension to infinity yields the desired inequality for discrete entropy.

\subsection{Main results}
Throughout the rest of the paper, to make the statements concise and exclude trivial cases, all differential entropies are assumed to exist and be finite. We now state our main results on linear entropy inequalities.

\begin{theorem}
\label{thm:main}
Let $\left(a_{ij}\right) \in \Z^{n \times m}$ satisfies that $a_{i1},\ldots,a_{im}$ are relatively prime, for each $i=1,\ldots,n$. Let $\alpha_1,\ldots,\alpha_n \in \mathbb{R}$ be such that $\sum_{i=1}^n \alpha_i=0$. Suppose for any independent $\integers^d$-valued random variables $U_1,\ldots,U_m$, the following holds:
\begin{align}
\sum_{i=1}^n \alpha_i H \pth{\sum_{j=1}^m a_{ij} U_j} \leq 0.
\label{eq:A}
\end{align}
Then for any independent $\reals^d$-valued continuous random variables $X_1,\ldots,X_m$, the following holds:
\begin{align}
\sum_{i=1}^n \alpha_i h \pth{\sum_{j=1}^m a_{ij} X_j} \leq 0
\label{eq:B}
\end{align}
\end{theorem}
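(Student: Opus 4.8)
The plan is to deduce the differential-entropy inequality \prettyref{eq:B} from the discrete inequality \prettyref{eq:A} by a quantization argument built on R\'enyi's expansion \prettyref{eq:renyi0}. Given independent continuous $\reals^d$-valued random variables $X_1,\dots,X_m$, introduce the fine quantizations $U_j^{(m)} \triangleq \floor{m X_j}$, which are independent $\integers^d$-valued. Applying \prettyref{eq:A} to these gives $\sum_{i=1}^n \alpha_i H(\sum_j a_{ij} \floor{m X_j}) \le 0$ for every $m$. The goal is then to take $m\diverge$ and recover \prettyref{eq:B}. By R\'enyi's result \prettyref{eq:renyi0}, for each fixed $i$ we would like to say that $H(\sum_j a_{ij} \floor{m X_j})$ behaves like $d\log m + h(\sum_j a_{ij} X_j) + o(1)$. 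The quantity $d\log m$ contributes $(\sum_i \alpha_i) d\log m = 0$ because the inequality is balanced — this is precisely where the balancedness hypothesis is used — so all divergent terms cancel and the limit yields $\sum_i \alpha_i h(\sum_j a_{ij} X_j)\le 0$, as desired.

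The crux, and the step I expect to be the main obstacle, is justifying that $H(\sum_j a_{ij} \floor{m X_j})$ and $H(\floor{m \sum_j a_{ij} X_j})$ differ by only $o(1)$ as $m\diverge$: once this is in hand, R\'enyi's expansion applies directly to the latter. This is exactly the content flagged in the introduction as \prettyref{lmm:implmm1}. The difficulty is that $\floor{m a_{ij} X_j}$ and $a_{ij}\floor{m X_j}$ differ by a bounded integer (roughly of size $|a_{ij}|$), and summing over $j$ these bounded discrepancies accumulate; one must show that perturbing the argument of $H$ by a random integer supported on a bounded set — whose distribution may depend on the $X_j$'s in a complicated way — changes the entropy negligibly \emph{relative to the fact that the entropy itself is blowing up like $d\log m$}. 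The natural route is: (i) bound $|H(V + W) - H(V)| $ when $W$ takes finitely many values, but crucially the naive bound $H(W)\le \log(\text{const})$ is a constant, not $o(1)$, so one needs a sharper argument — presumably exploiting that $\floor{mX_j}$ is ``spread out'' (its distribution converges weakly after rescaling to a continuous law), so that conditioning on the bounded perturbation barely changes the entropy; R\'enyi's theorem again supplies the needed regularity by pinning down the asymptotics of $H(\floor{mX_j})$. Here the coprimality assumption on $a_{i1},\dots,a_{im}$ should enter: it guarantees $\sum_j a_{ij}\integers^d = \integers^d$, so the quantized linear combination does not get trapped in a sublattice (which would shift the $d\log m$ term and break the cancellation).

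Concretely, I would structure the proof as follows. First, state and prove the key lemma \prettyref{lmm:implmm1}: for independent continuous $X_1,\dots,X_m$ and integers $a_1,\dots,a_m$ that are relatively prime, $H(\sum_j a_j \floor{m X_j}) = H(\floor{m\sum_j a_j X_j}) + o(1)$ as $m\diverge$. The proof would compare the two integer-valued random variables coordinatewise, writing $\sum_j a_j \floor{m X_j} = \floor{m \sum_j a_j X_j} - E_m$ where $E_m$ is a bounded-support integer ``error,'' and then bound $|H(A - E_m) - H(A)|$ using that $A = \floor{m\sum_j a_j X_j}$ has entropy $\to\infty$ while $E_m$ lives on $O(1)$ values — the cleanest tool being a Fano-type / grouping bound showing $|H(A-E_m)-H(A)| \le H(E_m) + \text{(small)}$ combined with a refined argument that the mutual information $I(A; E_m)$ is actually $o(1)$ because $A$, suitably rescaled, converges to a continuous distribution whose law does not concentrate. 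Second, apply this lemma together with R\'enyi \prettyref{eq:renyi0} to each term $i$, obtaining $H(\sum_j a_{ij}\floor{mX_j}) = d\log m + h(\sum_j a_{ij} X_j) + o(1)$. Third, substitute into \prettyref{eq:A}, use $\sum_i\alpha_i = 0$ to kill the $d\log m$ terms, let $m\diverge$, and conclude \prettyref{eq:B}. A minor technical point to address throughout is that one must ensure all the $h(\sum_j a_{ij} X_j)$ are finite (guaranteed by the standing assumption in the paper) and that R\'enyi's expansion is genuinely applicable to the possibly-non-product law of $\sum_j a_{ij} X_j$ — which it is, since that sum is itself a continuous $\reals^d$-valued random variable.
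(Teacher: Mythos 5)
Your overall strategy coincides with the paper's: quantize, apply the discrete inequality to the quantized variables, invoke R\'enyi's expansion together with the key lemma that the entropy of the quantized sum and that of the sum of the quantized variables differ by $o(1)$, and use balancedness to cancel the divergent $d\log m$ terms. Your diagnosis of why the naive bound $|H(A+W)-H(A)|\le H(W)=O(1)$ is insufficient, and of the role of coprimality, is also essentially right. There are, however, two genuine gaps. First, you apply R\'enyi's expansion and the key lemma directly to arbitrary continuous $X_j$ on $\reals^d$, but R\'enyi's result requires $H(\floor{X})<\infty$, and the paper's proof of the key lemma (via the binary-expansion argument behind \prettyref{lmm:intfracindep}) requires the variables to live in $[0,1]^d$. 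Finiteness of $h(X)$ does \emph{not} imply finiteness of $H(\floor{X})$ --- a suitably heavy-tailed density is a counterexample --- so your argument as written breaks for unbounded $X_j$. The paper handles this with a separate truncation step (\prettyref{lmm:implmm2}), showing $h(\sum_j a_j X_j^{(N)})\to h(\sum_j a_j X_j)$, followed by an affine rescaling of $[-N,N]^d$ onto $[0,1]^d$ whose additive constants again cancel by balancedness; some version of this reduction is needed.

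Second, your sketch of the key lemma stops at the assertion that $I(A;E_m)=o(1)$ ``because $A$, suitably rescaled, converges to a continuous distribution whose law does not concentrate.'' That is the right target but not a proof, and it is the hardest part of the theorem. The paper's argument (\prettyref{lmm:mutual1}) converts the mutual information into a $T$-information (legitimate since the error takes boundedly many values), decomposes it into total-variation distances between the conditional laws $P_{A_k\mid Z_k=z}$, controls one piece via Pinsker's inequality together with the companion fact $I(Z_k;B_k)\to0$ (itself a consequence of R\'enyi's theorem through the asymptotic independence of integer and fractional parts), and controls the other by writing $z'-z=\sum_i a_i b_i$ --- this is exactly where coprimality is used, rather than through any sublattice consideration --- and shifting each $X_i$ by $b_i/2^k$, invoking continuity of translation in $L^1$. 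Without an argument of this kind the central claim of your plan remains unestablished.
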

\begin{remark}
Without loss of any generality, we can always assume that the coefficients of each linear combination of random variables in \prettyref{eq:A}
are relatively prime. This is because for each $i$ we can divide $a_{i1},\ldots,a_{im}$ by their greatest common divisor so that the resulting entropy inequality remains the same, thanks to the scale invariance of the Shannon entropy.
\end{remark}

\begin{theorem}
\label{thm:converse}
Let $(a_{ij}) \in \reals^{n \times m}$ and $\alpha_1,\ldots,\alpha_n \in \reals$ be such that $\sum_{i=1}^n \alpha_i =0$. If 
\begin{align*}
\sum_{i=1}^n \alpha_i h \pth{\sum_{j=1}^m a_{ij} X_j} \leq 0
\end{align*}
holds for any $\reals^d$-valued independent and continuous random variables $X_1,\ldots,X_m$, then
\begin{align*}
\sum_{i=1}^n \alpha_i H \pth{\sum_{j=1}^m a_{ij} U_j} \leq 0
\end{align*}
holds for any $\reals^d$-valued independent and discrete random variables $U_1,\ldots,U_m$.
\end{theorem}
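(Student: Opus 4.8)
The plan is to deduce the discrete inequality from the continuous one by perturbing a discrete random variable by a vanishingly small Gaussian — which makes it continuous, so that the hypothesis applies — and then letting the noise level tend to zero. The subtlety is that a dimension-dependent additive constant survives this limit and is \emph{not} killed by balancedness alone; the device for removing it is to first replace the $U_j$'s by a ``tensorized-and-projected'' copy whose linear combinations have entropy an arbitrarily large multiple of the original ones while the ambient dimension stays $d$, and then divide by that multiple.

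The first ingredient is a companion to R\'enyi's formula \prettyref{eq:renyi0}: for any $\reals^d$-valued discrete $U$ with $H(U) < \infty$ and $G \sim \mathcal N(0, I_d)$ independent of $U$,
\[
  h\pth{U + \sqrt t\, G} = H(U) + \tfrac d2 \log(2\pi e t) + o(1), \qquad t \to 0^+.
\]
This follows by writing $h(U + \sqrt t\, G) = I(U; U + \sqrt t\, G) + h(\sqrt t\, G) = I(U; U + \sqrt t\, G) + \tfrac d2 \log(2\pi e t)$ and checking $I(U; U + \sqrt t\, G) \to H(U)$, \ie $H(U \mid U + \sqrt t\, G) \to 0$: decode $U$ from $U + \sqrt t\, G$ by the nearest of its $K$ likeliest atoms; for each fixed $K$ the conditional error probability given that $U$ is among those atoms is $o_t(1)$, and sending first $t \to 0$ and then $K \to \infty$ (using $H(U) < \infty$ to control the remaining tail) gives $H(U \mid U + \sqrt t\, G) \to 0$. (An initial truncation also reduces this to finitely supported $U$, for which the claim is immediate.)

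Now fix discrete $\reals^d$-valued independent $U_1, \dots, U_m$; we may assume $H(U_j) < \infty$ for all $j$ (hence $H(V_i) < \infty$ for $V_i := \sum_{j} a_{ij} U_j$, since $V_i$ is a function of $(U_1,\dots,U_m)$) and that no row of $(a_{ij})$ vanishes identically. For an integer $K \ge 1$ let $\tilde U_j := (U_j^{(1)}, \dots, U_j^{(K)}) \in \reals^{dK}$, where the $U_j^{(\ell)}$ are \iid copies of $U_j$, mutually independent over $j$ and $\ell$; then $\tilde V_i := \sum_j a_{ij} \tilde U_j$ is a tuple of $K$ independent blocks each distributed as $V_i$, so $H(\tilde V_i) = K\, H(V_i)$. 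Since each $\operatorname{supp}(\tilde V_i)$ is countable, the set of linear maps $\pi : \reals^{dK} \to \reals^d$ injective on every $\operatorname{supp}(\tilde V_i)$ is the complement of a Lebesgue-null set; fix such a $\pi$ and set $\hat U_j := \pi \tilde U_j$. The $\hat U_j$ are independent, discrete, $\reals^d$-valued, and $\sum_j a_{ij} \hat U_j = \pi \tilde V_i$ with $H(\pi \tilde V_i) = H(\tilde V_i) = K\, H(V_i)$. Finally set $X_j := \hat U_j + \sqrt t\, G_j$ with $G_j \sim \mathcal N(0, I_d)$ \iid and independent of everything, so the $X_j$ are independent, continuous, and $\reals^d$-valued. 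Since $\sum_j a_{ij} X_j \stackrel{d}{=} \pi \tilde V_i + \sqrt{t\, c_i}\, G$ with $c_i := \sum_j a_{ij}^2 > 0$, the lemma gives $h\pth{\sum_j a_{ij} X_j} = K\, H(V_i) + \tfrac d2 \log(2\pi e t) + \tfrac d2 \log c_i + o_t(1)$; in particular all these differential entropies are finite, so the hypothesis applies: $0 \ge \sum_i \alpha_i\, h\pth{\sum_j a_{ij} X_j}$. Because $\sum_i \alpha_i = 0$ the term $\tfrac d2 \log(2\pi e t)$ cancels, and letting $t \to 0$ gives $0 \ge K \sum_i \alpha_i H(V_i) + \tfrac d2 \sum_i \alpha_i \log c_i$; dividing by $K$ and letting $K \to \infty$ annihilates the constant and yields $\sum_i \alpha_i H(V_i) \le 0$, as required.

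The one genuinely non-cosmetic point is precisely the constant $\tfrac d2 \sum_i \alpha_i \log c_i$: it really is present (apply the continuous inequality to Gaussian $X_j$'s) and it survives balancedness, which is the whole reason for the tensorize-and-project step — the latter decouples the linear growth of $H(V_i)$ from the ambient dimension $d$ on which that constant depends. Among the remaining items, the asymptotic in the lemma for infinitely supported $U$ needs the tail argument sketched above (finite support being immediate), while the independence of the $\hat U_j$, the existence of a suitable generic projection $\pi$, and the finiteness of the differential entropies involved are routine.
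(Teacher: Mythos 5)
Your proposal is correct and follows essentially the same route as the paper: the tensorize-and-project construction is exactly the paper's embedding lemma (Lemma~\ref{lmm:Embedding}), and the small-noise asymptotic $h(U+\sqrt{t}\,G)=H(U)+\tfrac d2\log(2\pi e t)+o(1)$ is the paper's Lemma~\ref{lmm:NoisyMutual}, with the limits taken in the same order ($t\to 0$, then $K\to\infty$) to kill the residual constant that balancedness alone does not remove. The only (immaterial) difference is that you establish $I(U;U+\sqrt{t}\,G)\to H(U)$ by a decoding argument, whereas the paper uses weak lower semicontinuity of divergence, and you specialize the noise to be Gaussian while the paper keeps it generic.
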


\begin{remark}[iid random variables]
\label{rmk:allowiid}
For additive-combinatorial entropy inequalities, when (some of) the random variables are further constrained to be identically distributed, a number of strengthened inequalities have been obtained. For instance, if $U$ and $U'$ are independent and identically distributed (iid) discrete random variables, then (cf., \eg, \cite[Theorems 1.1 and 2.1]{MK10})
\begin{equation} 
	\frac{1}{2}\leq \frac{H(U-U')-H(U)}{H(U+U')-H(U)} \leq 2
	\label{eq:doubling-H}
\end{equation}
and for iid continuous $X,X'$,
\begin{equation}
	\frac{1}{2}\leq \frac{h(X-X')-h(X)}{h(X+X')-h(X)} \leq 2
	\label{eq:doubling-h}
\end{equation}
which are stronger than what would be obtained from \prettyref{eq:sumdiff} and \prettyref{eq:sumdiff-h} by substituting $Y=X'$.

As evident from the proof, both \prettyref{thm:main} and \prettyref{thm:converse} apply verbatim to entropy inequalities involving independent random variables with arbitrary distributions.
Consequently, \prettyref{eq:doubling-h} and \prettyref{eq:doubling-H} are in fact equivalent. Formally, fix a partition $S_1, \ldots, S_K$ of $[m] \triangleq \{1,\ldots,m\}$. Then \prettyref{eq:A} holds for independent $U_1,\ldots,U_m$ so that $\{U_j\}_{j\in S_k}$ are iid for $k\in [K]$ if and only if
\prettyref{eq:B} holds for independent $X_1,\ldots,X_m$ so that $\{X_j\}_{j\in S_k}$ are iid for $k\in [K]$. 
It is worth noting that this result is not a special case of Theorems \ref{thm:main} and \ref{thm:converse}; nevertheless, the proofs are identical. 
\end{remark}

\begin{remark}
\label{rmk:chan}	
The nature of the equivalence results that we obtained in this paper for linear inequalities for weighted sums of independent random variables bear some similarity to a result established by Chan in \cite{Chan03} for linear entropy inequalities of \emph{subsets} of random variables, as opposed to sums of independent random variables. In particular, he established that the class of linear inequalities for Shannon entropy and differential entropy are equivalent provided the inequalities are ``balanced" in the following sense.
For example, consider the following entropy inequalities for discrete random variables $X_1$ and $X_2$:
\begin{align}
H(X_1)+H(X_2) - H(X_1,X_2) \geq 0, \label{eq:chan1}\\
H(X_1,X_2) - H(X_1) \geq 0 \label{eq:chan2}.
\end{align}
The inequality \prettyref{eq:chan1} is said to be balanced because the sum of the coefficients of the entropy terms in which $X_1$ appears equals zero and the same is true for $X_2$ as well. However, the inequality \prettyref{eq:chan2} is unbalanced because $X_2$ appears only in the first term.
Though the notion of balancedness considered in \cite{Chan03} is different from ours, the technique employed for extending the discrete entropy inequalities to the continuous case is similar to ours, i.e., through discretization of continuous random variables; however, as discussed before, the key argument is to show that the entropy of the sum of quantized random variables is asymptotically equal to that of the quantized sum, a difficulty which is not present in dealing with subsets of random variables.

To deduce the discrete inequality from its continuous counterpart, the method in \cite{Chan03} is to assume, without loss of generality, the discrete random variables are integer-valued
and use the fact that $H(A) = h(A+U)$ for any $\integers$-valued $A$ and $U$ independently and uniformly distributed on $[0,1]$.
Clearly this method does not apply to sums of independent random variables. 
\end{remark}

\subsection{Organization}
	\label{sec:org}
	The rest of the paper is organized as follows.
	Before giving the proof of the main results,
 in \prettyref{sec:sharp} we pause to discuss the open problem of determining the sharp constants in additive-combinatorial entropy inequalities and the implications of our results.
	 The proof of the main theorems are given in Sections \ref{sec:pf-main} and \ref{sec:pf-converse}, with the technical lemmas proved in \prettyref{sec:pflmm}.
Following \cite{KM14}, the notion of differential entropy can be extended to locally compact groups by replacing the reference measure (Lebesgue) by the corresponding Haar measure. In \prettyref{sec:groupstuff} we generalize \prettyref{thm:main} to random variables taking values in connected abelian Lie groups.

\section{On sharp constants in additive-combinatorial entropy inequalities}
	\label{sec:sharp}

The entropy inequalities \prettyref{eq:doubling-H} and \prettyref{eq:doubling-h} can be viewed as the information theoretic analogs of the following additive-combinatorial inequality proved by Ruzsa \cite{Ruzsa91}: For any finite $A \subset \integers^n$( or any abelian group)
\begin{align}
\label{eq:setRuzsa}
\log \frac{|A-A|}{|A|} \leq 2 \log \frac{|A+A|}{|A|}.
\end{align}
The constant $``2"$ in \prettyref{eq:setRuzsa} is known to be sharp (see \cite{HRY99} or \cite[p.~107]{Ruzsa-lecture}).
The crucial idea for the construction is to approximate cardinality by volume by considering the lattice points inside a convex body.
In particular, for any convex body $K$ in $\reals^n$, denote its quantized version $\qth{K}_L \triangleq K \cap (\frac{1}{L} \integers^n)$, where $L\in\naturals$. The sum and difference sets of $\qth{K}_L$ is related to those of $K$ through $\qth{K\pm K}_L =\qth{K}_L \pm \qth{K}_L$.
If we fix the dimension $n$ and let $L \rightarrow \infty$, it is well-known that the cardinality of $\qth{K}_L$ is related to the volume of $K$ via $
	|[K]_L| = \text{vol}(K) L^n (1+o(1))$.
Thus,
\begin{align*}
	\frac{|[K]_L\pm [K]_L|}{|[K]_L|} = \frac{\text{vol}(K\pm K)}{\text{vol}(K)} (1+o(1)).
\end{align*}
A classical result of Rogers and Shephard \cite{RS57} states that for any convex $K \in \reals^n$, $\text{vol}(K-K) \leq {2n \choose n}\text{vol}(K)$ with  
equality if and only if $K$ is a simplex. Since $K$ is convex, $K+K=2K$ and thus $\text{vol}(K+K)=2^n \text{vol}(K)$. Now taking $K$ to be the standard simplex
$\Delta_n = \sth{x \in \reals^n_+: \sum_{i=1}^n x_i \leq 1}$, we obtain
\begin{align*}
\frac{\log \frac{|[\Delta_n]_L- [\Delta_n]_L|}{|[\Delta_n]_L|}}{\log \frac{|[\Delta_n]_L+ [\Delta_n]_L|}{|[\Delta_n]_L|}}	
	 = \frac{\log \frac{{2n \choose n}}{n!}-\log \frac{1}{n!}+o_L(1)}{\log \frac{2^n}{n!}-\log \frac{1}{n!} +o_L(1)}= \frac{\log \binom{2n}{n}+o_L(1)}{n \log 2 +o_L(1)},
\end{align*}
where we used $\text{vol}(\Delta_n)=\frac{1}{n!}, \text{vol}(\Delta_n-\Delta_n)=\frac{1}{n!}{2n \choose n}$ and $\text{vol}(\Delta_n+\Delta_n)=\frac{2^n}{n!}$. Sending $L \rightarrow \infty$ followed by $n \rightarrow \infty$ yields that the sharpness of \prettyref{eq:setRuzsa}.

Analogously, one can investigate the best possible constants in the Shannon entropy entropy inequalities \prettyref{eq:doubling-H} as well as its continuous analog \prettyref{eq:doubling-h}. It is unclear if the constants $1/2$ and $2$ are the best possible. However, as a consequence of \prettyref{thm:main} and \prettyref{thm:converse}, one can establish that the sharp constants for the discrete and continuous versions are the same, and dimension-free (see \prettyref{app:sharp} for a proof):
\begin{prop}
\label{prop:sharp}	
	For \iid $U$ and $U'$ and \iid $X$ and $X'$, 
\begin{align*}
\frac{1}{2} \leq & \inf_{U \in \integers^n} \frac{H(U-U')-H(U)}{H(U+U')-H(U)} = \inf_{X \in \reals^n} \frac{h(X-X')-h(X)}{h(X+X')-h(X)} \\
\leq & \sup_{X \in \reals^n} \frac{h(X-X')-h(X)}{h(X+X')-h(X)} =\sup_{U \in \integers^n} \frac{H(U-U')-H(U)}{H(U+U')-H(U)} \leq 2. \label{eq:Hhsame}
\end{align*}
Furthermore, the infimum and the supremum are independent of the dimension $n$.
\end{prop}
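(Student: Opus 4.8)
\emph{Overview.} The plan is to reduce \prettyref{prop:sharp} to \prettyref{thm:main}, \prettyref{thm:converse} and \prettyref{rmk:allowiid}, and then to prove dimension-independence by a tensorization/embedding argument carried out on the discrete side.

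\emph{The two equalities.} Fix a dimension $n$ and a real parameter $c$. Writing $Z_1=X$ and $Z_2=X'$ (an iid copy), the inequality $h(X-X')-h(X)\ge c\,(h(X+X')-h(X))$ is exactly the balanced linear inequality $-h(Z_1-Z_2)+(1-c)\,h(Z_1)+c\,h(Z_1+Z_2)\le 0$, whose coefficient rows $(1,-1),(1,0),(1,1)$ are coordinatewise coprime and whose coefficients sum to $0$; symmetrically, $h(X-X')-h(X)\le c\,(h(X+X')-h(X))$ is the balanced inequality with coefficients $1,c-1,-c$. By the entropy power inequality $h(X+X')>h(X)$ strictly for every nondegenerate continuous $X$, and $h(X-X')=h(X+(-X'))>h(X)$ as well; on the discrete side $H(U+U')>H(U)$ for every nondegenerate $U$ (equality would force the law of $U$ to be invariant under a nontrivial subgroup of $\integers^n$, which is impossible), and likewise $H(U-U')>H(U)$. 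Hence all ratios in the statement are well defined and lie in $(0,\infty)$, and "\,$\text{ratio}\ge c$ for all iid continuous $X,X'$ in $\reals^n$\," is equivalent to the first displayed inequality holding for all such pairs, and similarly on the discrete side. Since $c\,(h(X+X')-h(X))$ is nondecreasing in $c$, the set of admissible $c$ is a half-line whose right endpoint is $\inf_{X\in\reals^n}(\cdots)$; by \prettyref{rmk:allowiid} (applied with the one-block partition $S_1=\{1,2\}$) this half-line is unchanged upon replacing "\,continuous $X,X'$ in $\reals^n$\," by "\,discrete $U,U'$ in $\integers^n$\,", so the two endpoints coincide, i.e.\ $\inf_{X\in\reals^n}(\cdots)=\inf_{U\in\integers^n}(\cdots)$. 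The identity for the suprema follows the same way from the reversed inequality. Finally, the outer bounds $\tfrac12\le\text{ratio}\le 2$ for all iid discrete $U$ are precisely \prettyref{eq:doubling-H}, giving the first and last inequalities in the statement.

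\emph{Dimension-independence.} Let $\beta_n$ and $\gamma_n$ denote the common (discrete $=$ continuous, by the previous step) infimum and supremum in dimension $n$. Padding with zeros embeds $\integers$ in $\integers^n$ without changing $H(U)$, $H(U+U')$ or $H(U-U')$, so every ratio attainable in dimension $1$ is attainable in dimension $n$; thus $\beta_n\le\beta_1$ and $\gamma_n\ge\gamma_1$. For the reverse it suffices to prove $\beta_1\le\text{ratio}(U)\le\gamma_1$ for every nondegenerate finite-entropy $\integers^n$-valued $U$. First reduce to finitely supported $U$ by truncating to its most likely atoms: since $H(U)<\infty$ (hence $H(U\pm U')\le 2H(U)<\infty$), the three entropies, and therefore the ratio, converge as the truncation improves. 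When $U$ is supported on a finite set $A\subset\integers^n$, choose an integer $M$ larger than the coordinatewise diameter of $A+A$ and let $\phi(x)=\sum_{k=1}^{n}x_k M^{k-1}$; this is a group homomorphism $\integers^n\to\integers$ that is injective on each of $A$, $A+A$, $A-A$. Consequently $\phi(U)\pm\phi(U')=\phi(U\pm U')$ and $\phi$ preserves all three entropies, so the $\integers$-valued variable $\phi(U)$ realizes the same ratio as $U$. Hence $\beta_n=\beta_1$ and $\gamma_n=\gamma_1$ for every $n$, and the continuous statement follows from the first step.

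\emph{Main obstacle.} The two equalities and the outer bounds are essentially bookkeeping on top of the main theorems; the real content is the dimension-independence, and its delicate point is the reduction from infinite- to finite-support $U$. One must check that truncating $U$ perturbs $H(U)$, $H(U+U')$ and especially $H(U-U')$ by vanishingly little, which uses the finiteness of $H(U)$ to dominate the tails through the convolution --- this is not completely automatic, since total-variation convergence alone need not force convergence of entropy. The Freiman-type embedding $\phi$ is elementary, but note that it has no analogue over $\reals$ (there is no linear injection $\reals^n\hookrightarrow\reals$), which is exactly why the dimension reduction has to be done on the discrete side and then transported back via the equivalence.
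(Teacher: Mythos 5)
Your proof is correct and follows the same two-part skeleton as the paper: the equalities come from encoding ``ratio $\ge c$'' (resp.\ $\le c$) as a balanced linear inequality and invoking \prettyref{thm:main}, \prettyref{thm:converse} and \prettyref{rmk:allowiid}, and dimension-freeness comes from a trivial embedding $\integers\hookrightarrow\integers^n$ in one direction and a Freiman-type linear embedding $\integers^n\to\integers$ in the other. Two remarks on where you diverge. First, you are more careful than the paper about well-definedness: the strict positivity of $h(X\pm X')-h(X)$ and $H(U\pm U')-H(U)$ for nondegenerate variables is indeed needed to pass between the ratio formulation and the balanced inequality, and the paper silently assumes it. Second, your truncation-to-finite-support step is an avoidable detour: the paper's \prettyref{lmm:Embedding} chooses a \emph{generic real} base $M$, for which $x\mapsto\sum_k x_kM^k$ is injective on any fixed \emph{countable} set (all but countably many $M$ work), so no reduction to finite support is required; the price is that the image lives in a countable subset of $\reals$ rather than in $\integers$, which is harmless since \prettyref{thm:converse} is stated for general discrete $\reals^d$-valued variables. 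Your truncation claim is true but is the one assertion you do not actually prove: convergence of $H(U_N\pm U_N')$ to $H(U\pm U')$ does not follow from weak convergence alone, and needs an argument such as monotone convergence of $-r\log r$ applied to the pointwise increasing (after normalization) convolution masses on the set where they stay below $1/e$, using $H(U\pm U')\le 2H(U)<\infty$ to control the limit --- worth writing out if you keep that route, since you correctly flag it as the delicate point.
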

It is worth pointing out that the dimension-freeness of the best Shannon entropy ratio follows from standard arguments (tensorization and linear embedding of $\integers^n$ into $\integers$), which have been previously used for proving analogous results for set cardinalities \cite{HRY99}; however, it is unclear how to directly prove the ratio of differential entropy is dimension-independent without resorting to \prettyref{thm:main}.
In view of the success of continuous approximation in proving the sharpness of \prettyref{eq:setRuzsa}, proving the sharpness of \prettyref{eq:doubling-h} for differential entropies might be more tractable than its discrete counterpart \prettyref{eq:doubling-H}.

\section{Proof of \prettyref{thm:main}}
\label{sec:pf-main}
We first introduce the notations followed throughout the paper.
For $x\in\reals$, let $\lfloor x \rfloor \triangleq \max\{k \in \integers: k \leq x\}$ and  $\{x\}=x-\floor{x}$ denote its integer and fractional parts, respectively. For any $k\in\naturals$, define
\begin{equation}
	\left[x\right]_k \triangleq \frac{\lfloor 2^kx \rfloor }{2^k}, \quad \left\{x\right\}_k \triangleq\frac{\{2^kx\}}{2^k}.
	\label{eq:quant}
\end{equation}
Hence,
\begin{align*}
x= \frac{\lfloor 2^kx \rfloor}{2^k} + \frac{\{2^kx\}}{2^k}=\left[x\right]_k+\{x\}_k.
\end{align*}
For $x \in \mathbb{R}^d$, $\left[x\right]_k$ and $\left\{x\right\}_k$ are defined similarly by  applying the above operations componentwise.

For $N>0$, denote the hypercube $B_N^{(d)} \triangleq \left[-N,N\right]^d$. For a $\mathbb{R}^d$-valued random variable $X$, let $X^{(N)}$ denote a random variable distributed according to the conditional distribution $P_{X | {X \in B_N^{(d)}}}$.
If $X$ has a pdf $f_X$, then $X^{(N)}$ has the following pdf:
\begin{align}
f_{X^{(N)}}(x)= \frac{f_{X}(x) \mathbbm{1}\{x \in B_N^{(d)}\}}{\pprob{X\in B_N^{(d)}}}.
\label{eq:ftruncate}
\end{align}


The following lemma is the key step to proving \prettyref{thm:main}.
\begin{lemma}
\label{lmm:implmm1}
Let $X_1,\ldots,X_m$ be independent $\left[0,1\right]^d$-valued continuous random variables such that   both $h\left(X_j\right)$ and $H\left(\lfloor X_j \rfloor \right)$ are finite for each $j \in \left[m\right]$. Then for any $a_1,\ldots,a_m \in \Z$ that are relatively prime, 
$$
\lk \left( H\bigg( \left[ \sum_{i=1}^m a_iX_i\right]_k  \bigg)- H\bigg(\sum_{i=1}^m a_i \left[X_i \right]_k \bigg)   \right) =0.
$$
\end{lemma}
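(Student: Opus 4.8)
The plan is to compare the two quantized quantities through the identity
$\sum_i a_i \lfloor 2^k X_i\rfloor = \lfloor 2^k \sum_i a_i X_i\rfloor - R_k$,
where $R_k \triangleq \sum_i a_i \{2^k X_i\} - \{2^k \sum_i a_i X_i\}$ is an integer-valued ``carry'' random variable. Multiplying the statement by $2^k$ and using scale invariance of Shannon entropy, it suffices to show
$H(\lfloor 2^k \sum_i a_i X_i\rfloor) - H(\lfloor 2^k \sum_i a_i X_i\rfloor - R_k) \to 0$.
Since conditioning reduces entropy and adding a function of the conditioning variables is free, one has
$0 \le H(W) - H(W - R) \le H(R \mid W-R) \le H(R)$ is the wrong direction; instead I would bound the difference symmetrically: writing $W = \lfloor 2^k\sum a_i X_i\rfloor$ and $W' = W - R_k$, both $W$ and $W'$ are deterministic functions of the pair, and $|H(W) - H(W')| \le H(R_k \mid W') \le H(R_k)$ as well as $\le H(R_k\mid W)$. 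So the whole problem reduces to showing $H(R_k) \to 0$, i.e. that the carry term becomes asymptotically deterministic (in fact concentrates on a single value, which by a shift argument we may take to be $0$).

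The main work is therefore to control $H(R_k)$. First I would observe $R_k$ takes values in a fixed finite set (bounded by $\sum_i |a_i|$ since each $\{2^k X_i\} \in [0,1)$), so $H(R_k) \to 0$ iff one value of $R_k$ has probability tending to $1$. The key input is R\'enyi's asymptotic \prettyref{eq:renyi0}: for each fixed $j$, as $k\to\infty$ the rescaled fractional part $\{2^k X_i\}$ — equivalently $2^k X_i \bmod 1$ after the quantization — equidistributes, and more precisely $H(\lfloor 2^k X_i\rfloor) = d k\log 2 + h(X_i) + o(1)$, which (by a conditional-entropy / chain-rule argument applied to finer refinements) forces the conditional law of the ``next bits'' of $X_i$ given $\lfloor 2^k X_i\rfloor$ to converge to uniform on $[0,1)^d$ in a suitable averaged sense. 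Since the $X_i$ are independent, the vector $(\{2^k X_1\},\ldots,\{2^k X_m\})$ becomes asymptotically uniform on $[0,1)^{md}$, and then $R_k = \sum_i a_i \{2^k X_i\} - \{\sum_i a_i\,\{2^k X_i\}\}$ is a fixed measurable function of an asymptotically-uniform vector. This alone does not give concentration; here is where relative primality of the $a_i$ must enter.

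The role of $\gcd(a_1,\ldots,a_m)=1$: I expect this to be the subtle point, and the main obstacle. Without it (e.g. all $a_i$ even) the carry can genuinely fail to concentrate. The idea is that relative primality lets us, via B\'ezout, find integer combinations that act as a ``one-bit shift'' and thereby show that the distribution of $R_k$ at scale $k$ is, up to a vanishing error, a refinement/mixture of its distribution at scale $k+1$; combined with the finite range this yields a Cauchy-type argument forcing $H(R_k)$ to have a limit, and an explicit coupling (shifting the argument of one $X_i$ by $2^{-k}$ changes $R_k$ by a controlled amount with probability $\to 0$, using that the densities are $L^1$ and translation-continuous) pins the limit to be $0$. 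Concretely: because the pdfs $f_{X_i}$ are fixed $L^1$ functions, $\|f_{X_i}(\cdot) - f_{X_i}(\cdot - 2^{-k} v)\|_1 \to 0$ for any $v$, so the joint law of $(\lfloor 2^k X_i\rfloor)_i$ is asymptotically invariant under adding any fixed integer vector to the $\lfloor 2^k X_i\rfloor$'s scaled appropriately; relative primality guarantees the orbit of such shifts hits every residue needed to merge the possible values of $R_k$ into one. I would carry this out by first proving the clean reduction $|H(W)-H(W')| \le H(R_k)$, then proving $H(R_k)\to 0$ via the translation-continuity of $L^1$ densities together with a B\'ezout argument exploiting $\gcd(a_i)=1$, and finally invoking \prettyref{eq:renyi0} only where needed to guarantee finiteness and to handle the $d>1$ componentwise bookkeeping. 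The anticipated difficulty is making the B\'ezout/shift step fully rigorous in the presence of the nonlinear fractional-part map and the $d$-dimensional componentwise structure.
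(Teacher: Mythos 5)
Your setup is right --- the identity $\sum_i a_i \lfloor 2^k X_i\rfloor = \lfloor 2^k \sum_i a_i X_i\rfloor - R_k$ with $R_k$ an integer-valued carry supported on a fixed finite set is exactly the decomposition the paper uses (their $Z_k = \lfloor \sum_i a_i\{2^kX_i\}\rfloor$ coincides with your $R_k$). But the central reduction of your plan, namely that it suffices to prove $H(R_k)\to 0$, is false: the carry does \emph{not} become asymptotically deterministic. Take $m=2$, $a_1=a_2=1$ and $X_1,X_2$ iid uniform on $[0,1]$. Then $\{2^kX_1\},\{2^kX_2\}$ are exactly iid uniform on $[0,1)$ for every $k$, and $R_k=\lfloor \{2^kX_1\}+\{2^kX_2\}\rfloor$ is Bernoulli$(1/2)$, so $H(R_k)=\log 2$ for all $k$. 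In general $R_k$ converges in law to $\lfloor\sum_i a_iU_i\rfloor$ with $U_i$ iid uniform, which is nondegenerate whenever $m\ge 2$; no Bézout or shift argument can collapse it. Your symmetric bound $|H(W)-H(W')|\le H(R_k)$ is valid but throws away the cancellation that makes the lemma true.

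The fix is to keep the difference as a difference: since $H(W\mid R_k)=H(W'\mid R_k)$ (given $R_k$ the two differ by a constant), one has $H(W)-H(W')=I(R_k;W)-I(R_k;W')$, and the task is to show $R_k$ becomes asymptotically \emph{independent} of both $W$ and $W'$ --- not that it degenerates. This is where your auxiliary ingredients actually belong. The term $I(R_k;W')$ vanishes by data processing plus the asymptotic independence of $\lfloor 2^kX_i\rfloor$ and $\{2^kX_i\}$, which follows from R\'enyi's expansion. The term $I(R_k;W)$ is where relative primality enters: one controls the $T$-information by showing the conditional laws $P_{W\mid R_k=z}$ merge in total variation as $k\to\infty$, using Bézout to write $z'-z=\sum_i a_ib_i$ and the $L^1$ translation-continuity of the densities to show that shifting each $X_i$ by $b_i/2^k$ is asymptotically invisible. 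So your Bézout/shift mechanism and your use of \prettyref{eq:renyi0} are the right tools, but they must be deployed to prove asymptotic independence of the carry from $W$ and $W'$; as written, the plan stalls at its first step because $H(R_k)\not\to 0$.
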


The next lemma allows us to focus on bounded random variables.
\begin{lemma}[Truncation]
\label{lmm:implmm2}
Let $X_1,\ldots,X_m$ be independent $\mathbb{R}^d$-valued random variables and $a_1,\ldots,a_m \in \mathbb{R}$.
If each $X_j$ has an absolutely continuous distribution and $h(X_j)$ is finite, then
\begin{align*}
\lim_{N\diverge} h \left( \sum_{j=1}^m a_j X_j^{(N)}\right) = h\left(\sum_{j=1}^m a_j X_j \right).
\end{align*}

\end{lemma}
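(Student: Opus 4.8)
The plan is to write $h(\sum_j a_j X_j^{(N)})$ as a conditional differential entropy under a coupling, identify the limiting behaviour via dominated/monotone convergence on the relevant densities, and check an integrability condition that legitimizes the limit. Concretely, I would first dispose of the degenerate coefficients: if $a_j = 0$ that term simply drops out, so assume all $a_j \neq 0$; also $h(a_j X_j) = h(X_j) + d\log|a_j|$ is finite, so after rescaling it suffices to treat the case $a_j = \pm 1$, i.e.\ to show $h(\sum_j \pm X_j^{(N)}) \to h(\sum_j \pm X_j)$. Write $S_N = \sum_j \pm X_j^{(N)}$ and $S = \sum_j \pm X_j$. Since each $X_j$ is absolutely continuous with finite $h(X_j)$, the sum $S$ has a density $g = g_1 * \cdots * g_m$ (with $g_j$ the density of $\pm X_j$), and likewise $S_N$ has density $g^{(N)} = g_1^{(N)} * \cdots * g_m^{(N)}$ where $g_j^{(N)}(x) = g_j(x)\mathbbm{1}\{x\in B_N^{(d)}\}/p_j(N)$ and $p_j(N) = \Prob[X_j \in B_N^{(d)}] \to 1$.

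The key analytic steps are then: (i) \emph{pointwise convergence of densities}: $g^{(N)}(x) \to g(x)$ for (Lebesgue-a.e., in fact every) $x$, which follows because $g_j^{(N)} \to g_j$ in $L^1$ and pointwise-monotonically up to the normalizing constant, and convolution is continuous under $L^1$ convergence; (ii) \emph{convergence of the ``positive part'' of the entropy integral}, i.e.\ $\int g^{(N)} \log (1/g^{(N)})$ restricted to $\{g^{(N)} \le 1\}$, handled by Fatou/dominated convergence since $-g\log g$ is dominated near $0$; (iii) \emph{uniform integrability of the ``negative part''} $\int_{\{g^{(N)}>1\}} g^{(N)}\log g^{(N)}$. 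For (iii) I would exploit that a bounded truncation only \emph{increases} the density pointwise by at most a factor $1/\prod_j p_j(N) \to 1$: more precisely, since each $g_j^{(N)} \le g_j / p_j(N)$, we get $g^{(N)} \le g / \prod_j p_j(N)$ pointwise, and combined with an upper bound coming from one of the convolution factors (e.g.\ $g^{(N)} \le \|g_1^{(N)}\|_\infty$ if $g_1\in L^\infty$, or more robustly a bound on the tail via a second-moment or via the finiteness of $h$), this pins down $g^{(N)}$ between $g/\prod p_j(N)$ above and something converging to $g$ below. The cleanest route is to show $\int |g^{(N)}\log g^{(N)} - g\log g| \to 0$ by splitting the domain and using the pointwise sandwich together with $\int g^{(N)} = 1 = \int g$ and Scheffé's lemma (which upgrades a.e.\ convergence plus equal total mass to $L^1$ convergence of the densities themselves).

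The step I expect to be the main obstacle is (iii), controlling the region where the density is large, i.e.\ ruling out an escape of ``negative entropy mass'' to $+\infty$ in the integrand as $N\to\infty$. This is exactly where the hypothesis that \emph{each} $h(X_j)$ is finite (not merely $h(S)$) should be used: finiteness of $h(X_j)$ plus absolute continuity gives enough regularity (e.g.\ via $h(X_j^{(N)}) \to h(X_j)$ for a single variable, which is the one-dimensional-in-$m$ base case and is standard) that one can bound $h(S_N)$ from above by, say, $\sum_j h(X_j^{(N)})$-type inequalities or by a maximal-entropy-under-moment-constraint argument after first also establishing a matching lower bound $\liminf_N h(S_N) \ge h(S)$ from Fatou applied to $-g^{(N)}\log g^{(N)}$ on $\{g^{(N)}\le 1\}$ together with uniform integrability. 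In short: lower semicontinuity of $h$ under the pointwise density convergence gives $\liminf_N h(S_N) \ge h(S)$ for free; the work is the reverse inequality $\limsup_N h(S_N) \le h(S)$, for which I would use the pointwise domination $g^{(N)} \le g/\prod_j p_j(N)$ to write $h(S_N) = -\int g^{(N)}\log g^{(N)} \le -\int g^{(N)}\log(g\cdot \prod_j p_j(N)^{-1})\,$ is the wrong direction, so instead bound $-\int g^{(N)} \log g^{(N)}$ by splitting at level $1$ and applying dominated convergence on each piece using the sandwich bounds, concluding $h(S_N) \to h(S)$.
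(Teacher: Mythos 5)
Your proposal is essentially correct and rests on the same two pillars as the paper's proof: (i) pointwise a.e.\ convergence of the convolved densities $g_N \to g$, and (ii) the pointwise domination $g_N \le g/\prod_j \Prob[X_j \in B_N^{(d)}] \le 2g$ for large $N$, combined with the integrability $\int g|\log g| < \infty$. Where you differ is in the execution of the limit interchange, and here the paper's route is worth knowing because it cleanly dissolves the obstacle you (correctly) flag as the main difficulty. Instead of trying to dominate $g_N \log(1/g_N)$ directly after splitting at level $1$ --- which is delicate on the region $\{g_N \le 1\}$, since $t \mapsto t\log(1/t)$ is not monotone and a lower bound on $g_N$ in terms of $g$ is unavailable --- the paper adds and subtracts $\int g_N \log g$ to get
$\bigl| h(\sum_j a_j X_j) - h(\sum_j a_j X_j^{(N)}) \bigr| \le D\bigl(P_{\sum_j a_j X_j^{(N)}} \,\|\, P_{\sum_j a_j X_j}\bigr) + \int |(g_N - g)\log g|$.
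The divergence term is killed by data processing and tensorization: it is at most $\sum_j D(P_{X_j^{(N)}}\|P_{X_j}) = \log\bigl(1/\prod_j \Prob[X_j\in B_N^{(d)}]\bigr) \to 0$, with no analysis of $g_N$ needed at all. The remaining term has $\log g$, not $\log g_N$, in the integrand, so the sandwich $|g_N - g| \le 3g$ gives the dominating function $3g|\log g|$ immediately and dominated convergence finishes. Your plan can be completed --- on $\{2g \le 1/e\}$ use monotonicity of $t\log(1/t)$ to dominate by $2g\log(1/(2g))$, and note $\{g > 1/(2e)\}$ has finite Lebesgue measure so the bounded integrand is harmless there --- but this is precisely the bookkeeping the paper's decomposition avoids. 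Two smaller points: your claim that Fatou gives $\liminf_N h(S_N) \ge h(S)$ ``for free'' still needs the reverse-Fatou domination $g_N(\log g_N)_+ \le 2g(\log 2g)_+$ on the large-density region, so it is not materially cheaper than the full argument; and Scheff\'e gives $L^1$ convergence of the densities but says nothing about $g_N\log g_N$, so it does not shortcut step (iii).
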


%

The following lemma is a particularization of \cite[Theorem 1]{Renyi59} (see \prettyref{eq:renyi0}) to the dyadic subsequence $m=2^k$:
\begin{lemma}
\label{lmm:lmmRenyi}
For any $\mathbb{R}^d$-valued random variable $X$ with an absolutely continuous distribution such that both $H\left( \lfloor X \rfloor \right)$ and $h\left(X\right)$ are finite,
$$
\lk \left(H \left(\left[X\right]_k \right) -dk\log2\right)= h\left(X\right).
$$
\end{lemma}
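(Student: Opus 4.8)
The statement is precisely R\'enyi's theorem \cite[Theorem~1]{Renyi59} read along the subsequence $m=2^k$, since $H([X]_k)=H(\floor{2^kX})$ by the scale‑invariance of Shannon entropy and $d\log m=dk\log 2$; so one option is simply to quote it. For a self‑contained argument the plan is to reduce the left‑hand side to a differential entropy of a smoothed density and then pass to the limit, in two steps.

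\emph{Step 1: reduction to a differential entropy.} Let $f$ be the pdf of $X$, let $\mathcal{F}_k$ be the $\sigma$‑algebra generated by the half‑open dyadic cubes of side $2^{-k}$, and set $\bar f_k\triangleq\Expect[f\mid\mathcal F_k]$, the Lebesgue conditional expectation, i.e.\ $\bar f_k$ is constant on each such cube $Q$ with value $\frac{1}{|Q|}\int_Q f$. A short computation, using $\sum_Q\int_Q f=1$, gives
\[
H([X]_k)-dk\log 2 \;=\; -\int_{\reals^d}\bar f_k\log\bar f_k \;=:\; h(\bar f_k),
\]
the differential entropy of the probability density $\bar f_k$ (equivalently, $\bar f_k$ is the density of $[X]_k+U_k$ with $U_k$ uniform on $[0,2^{-k})^d$ and independent of $X$). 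Since $t\mapsto-t\log t$ is concave and $\bar f_k=\Expect[f\mid\mathcal F_k]$, Jensen's inequality gives $h(\bar f_k)\ge h(X)$ for every $k$ (and, because the dyadic partitions are nested, $h(\bar f_k)$ is in fact non‑increasing in $k$). Hence it remains only to prove $\limsup_k h(\bar f_k)\le h(X)$.

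\emph{Step 2: passing to the limit.} By the Lebesgue differentiation theorem $\bar f_k\to f$ almost everywhere. Write $-t\log t=\phi_+(t)-\phi_-(t)$, where $\phi_+(t)\triangleq\max(-t\log t,0)$ is bounded by $1/e$ and $\phi_-(t)\triangleq\max(t\log t,0)$ is convex, so that $h(\bar f_k)=\int\phi_+(\bar f_k)-\int\phi_-(\bar f_k)$ and likewise $h(X)=\int\phi_+(f)-\int\phi_-(f)$ (finite by hypothesis). For the $\phi_-$ term, conditional Jensen gives $\int\phi_-(\bar f_k)\le\int\phi_-(f)<\infty$ uniformly in $k$, and Fatou then yields $\int\phi_-(\bar f_k)\to\int\phi_-(f)$. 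For the $\phi_+$ term, fix $\epsilon>0$; using $H(\floor X)=\sum_j(-q_j\log q_j)<\infty$ with $q_j\triangleq\prob{X\in C_j}$, $C_j\triangleq[j,j+1)^d$, together with $\int\phi_-(f)<\infty$, choose an integer $R$ so that the cubes with $\|j\|_\infty>R$ contribute at most $\epsilon$ to each of these two sums. A per‑cube Jensen bound, $\int_{C_j}\phi_+(\bar f_k)\le -q_j\log q_j+\int_{C_j}\phi_-(\bar f_k)\le -q_j\log q_j+\int_{C_j}\phi_-(f)$, bounds $\int_{\|x\|_\infty>R}\phi_+(\bar f_k)$ by $2\epsilon$ uniformly in $k$, while on the finite‑measure region $\{\|x\|_\infty\le R\}$ dominated convergence (with dominating constant $1/e$) gives $\int_{\|x\|_\infty\le R}\phi_+(\bar f_k)\to\int_{\|x\|_\infty\le R}\phi_+(f)$. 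Letting $k\to\infty$ and then $\epsilon\to 0$ gives $\limsup_k\int\phi_+(\bar f_k)\le\int\phi_+(f)$, so (with Fatou) $\int\phi_+(\bar f_k)\to\int\phi_+(f)$. Combining, $h(\bar f_k)\to h(X)$, which together with Step~1 proves the lemma; the $d$‑dimensional case is identical to $d=1$ with cubes replacing intervals.

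\emph{Main obstacle.} The crux is the $\phi_+$ term of Step~2: because $\reals^d$ has infinite Lebesgue measure one cannot apply dominated convergence globally to $\phi_+(\bar f_k)$, and the uniform tail truncation genuinely requires \emph{both} finiteness hypotheses $H(\floor X)<\infty$ and $h(X)$ finite — this is precisely where those assumptions enter (and why the statement fails without them). Everything else (the identity in Step~1, the monotonicity, and the $\phi_-$ term) is routine.
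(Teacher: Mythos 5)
The paper does not actually prove this lemma: it is quoted verbatim as a particularization of R\'enyi's Theorem~1 in \cite{Renyi59} to the subsequence $m=2^k$, which is exactly your first option. Your self-contained argument is correct and is essentially the standard proof of R\'enyi's theorem, so it goes beyond what the paper records. The Step~1 identity $H([X]_k)-dk\log 2=-\int\bar f_k\log\bar f_k$ with $\bar f_k=\Expect[f\mid\mathcal F_k]$ the dyadic martingale checks out (it is worth one line to note $H([X]_k)\le H(\floor{X})+dk\log 2<\infty$, so the left-hand side is never $\infty-\infty$); conditional Jensen for the concave $-t\log t$ gives $h(\bar f_k)\ge h(X)$ and monotonicity along the nested partitions; and the decomposition $-t\log t=\phi_+-\phi_-$ with the per-unit-cube Jensen bound $\int_{C_j}\phi_+(\bar f_k)\le -q_j\log q_j+\int_{C_j}\phi_-(f)$ correctly produces a tail bound uniform in $k$, which is indeed the only place both hypotheses $H(\floor{X})<\infty$ and $|h(X)|<\infty$ are needed; martingale (or Lebesgue-differentiation) a.e.\ convergence, dominated convergence on the bounded region, and Fatou then close the argument. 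What your proof buys is a transparent, in-document justification that pinpoints the role of each finiteness assumption; what the paper's citation buys is brevity. I see no gaps.
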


We are now ready to prove \prettyref{thm:main}.
\begin{proof}
We start by considering the case where $X_j \in \left[0,1\right]^d$ for each $j \in \left[m\right]$.  Since $X_j$'s are independent and $2^k\left[X_j\right]_k$ is $\mathbb{Z}^d$-valued for each $j \in \left[m\right]$, by assumption,
\begin{align}
\sum_{i=1}^n \alpha_i H \left( \sum_{j=1}^m a_{ij} \left[X_j \right]_k \right) \leq 0 \label{eq:star1}
\end{align}
holds where 
\begin{align}
 \sum_{i=1}^n \alpha_i=0. \label{eq:balanced}
\end{align}
By \prettyref{lmm:lmmRenyi}, 
$ H \left( \left[ X \right]_k \right)= dk \log2 + h \left( X \right)+ o_k(1)$. Thus,
\begin{align*}
h \left(  \sum_{j=1}^m a_{ij} X_j   \right)+ d k \log2 +o_k(1) &= H \left( \left[\sum_{j=1}^m a_{ij}X_j \right]_k   \right)\\
& \stackrel{(a)}= H \left( \sum_{j=1}^m a_{ij} \left[X_j \right]_k \right)+ o_k(1),
\end{align*}
where (a) follows from \prettyref{lmm:implmm1}.
Multiplying on both sides by $\alpha_i$ and summing over $i$, and in view of \eqref{eq:balanced}, we have
\begin{align}
\nonumber
 \sum_{i=1}^n \alpha_i h \left(  \sum_{j=1}^m a_{ij} X_j   \right)+ o_k(1)  &= \sum_{i=1}^n \alpha_i H \left( \sum_{j=1}^m a_{ij} \left[X_j \right]_k \right).
\end{align}
By \eqref{eq:star1}, sending $k$ to infinity yields the desired result. 

For the general case where $X_j \in \mathbb{R}^d$, let $Y_i= \sum_{j=1}^m a_{ij}X_j$ for $i \in \left[n\right]$. 
Let $\tilde{X}_j^{(N)} \triangleq \frac{X_j^{(N)}+N}{2N}$, which belongs to $\left[0,1\right]^d$. Thus,
\begin{align}
\nonumber
\sum_{i=1}^n \alpha_i h \left( \sum_{j=1}^m a_{ij} \tilde{X}_j^{(N)} \right) &= \sum_{i=1}^n \alpha_i h \left( \sum_{j=1}^m a_{ij}X_j^{(N)}  \right) + \sum_{i=1}^n \alpha_i\cdot \log \left(\frac{1}{2N}\right)^d\\
& = \sum_{i=1}^n \alpha_i h \left( \sum_{j=1}^m a_{ij} {X}_j^{(N)} \right), \label{eq:detgone}
\end{align}
where \eqref{eq:detgone} follows from \eqref{eq:balanced}. Hence, 
\begin{align*}
 \sum_{i=1}^n \alpha_i h\left( Y_i\right) & \stackrel{(a)}= \lim_{N \rightarrow \infty} \sum_{i=1}^n \alpha_i h \left( \sum_{j=1}^m a_{ij} X_j^{(N)} \right)\\
&\stackrel{(b)}=  \lim_{N \rightarrow \infty}\sum_{i=1}^n \alpha_i h \left( \sum_{j=1}^m a_{ij} \tilde{X}_j^{(N)} \right)\\
& \stackrel{(c)}\leq 0,
\end{align*}
where $(a)$ follows form \prettyref{lmm:implmm2} and $(b)$ follows from \eqref{eq:detgone}, and $(c)$ follows from the earlier result for $\left[0,1\right]^d$-valued random variables. The proof of \prettyref{thm:main} is now complete.
\end{proof}

\section{Proof of \prettyref{thm:converse}}
\label{sec:pf-converse}
\prettyref{thm:converse} relies of the following two lemmas. The first result is a well-known asymptotic expansion of the differential entropy of a discrete random variable contaminated by weak additive noise.
For completeness, we provide a short proof in \prettyref{sec:NoisyMutual}.

\begin{lemma}
\label{lmm:NoisyMutual}
Let $U$ be a discrete $\reals^d$-valued random variable such that $H(U) < \infty$ and $Z$ be a $\reals^d$-valued continuous random variable with $h(Z) > - \infty$. If $U$ and $Z$ are independent, then 
\begin{align*}
h(U+\varepsilon Z)=h(Z)+\log \varepsilon + H(U)+o_\varepsilon(1).
\end{align*}
\end{lemma}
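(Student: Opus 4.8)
\textbf{Proof proposal for \prettyref{lmm:NoisyMutual}.}

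The plan is to recognize $h(U+\varepsilon Z)$ through the identity $h(U+\varepsilon Z) = h(\varepsilon Z) + I(U; U+\varepsilon Z)$, which holds because $U+\varepsilon Z$ conditioned on $U$ is just a translate of $\varepsilon Z$, so $h(U+\varepsilon Z \mid U) = h(\varepsilon Z)$. Since $h(\varepsilon Z) = h(Z) + d\log\varepsilon$, the lemma reduces to showing $I(U; U+\varepsilon Z) \to H(U)$ as $\varepsilon \to 0$. (Strictly speaking $\log\varepsilon$ here should be $d\log\varepsilon$ for $\reals^d$-valued variables, matching the convention $h(aZ)=h(Z)+d\log a$ used earlier; I will write it consistently.) The intuition is transparent: as the noise scale $\varepsilon$ shrinks, observing $U+\varepsilon Z$ reveals $U$ almost perfectly, so the mutual information saturates at $H(U)$.

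The key steps, in order, are as follows. First, write $U$ as supported on a countable set $\{u_1, u_2, \ldots\}$ with probabilities $p_1, p_2, \ldots$. Second, establish the upper bound: $I(U; U+\varepsilon Z) \le H(U)$ trivially since $U$ is discrete, so $\limsup_{\varepsilon\to 0} I(U;U+\varepsilon Z) \le H(U)$. Third, for the lower bound, consider the estimator $\hat U$ of $U$ from $Y_\varepsilon \triangleq U+\varepsilon Z$ defined by rounding $Y_\varepsilon$ to the nearest point of the support (say, with some fixed tie-breaking), or more simply apply Fano's inequality: $H(U \mid Y_\varepsilon) \le h_b(P_e) + P_e \log(|\mathrm{supp}(U)|-1)$ when the support is finite, where $P_e = \Prob[\hat U \ne U]$. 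One shows $P_e \to 0$ as $\varepsilon\to 0$ since $\varepsilon Z \to 0$ in probability and the support points are separated; hence $H(U\mid Y_\varepsilon)\to 0$, giving $I(U;Y_\varepsilon)\to H(U)$. Fourth, handle the countably-infinite-support case by a truncation argument: fix $K$, let $U_K$ be $U$ restricted to its $K$ most likely values (conditioned on that event), relate $I(U;Y_\varepsilon)$ to $I(U_K; Y_\varepsilon)$ up to an error controlled by $\Prob[U \notin \{u_1,\ldots,u_K\}]$ and $H(U) - H(U_K)$, which both vanish as $K\to\infty$ because $H(U)<\infty$. Finally, combine: $h(U+\varepsilon Z) = h(Z) + d\log\varepsilon + I(U;U+\varepsilon Z) = h(Z) + d\log\varepsilon + H(U) + o_\varepsilon(1)$.

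I expect the main obstacle to be the careful treatment of the countably-infinite support together with the interchange of limits: Fano's inequality in its clean form needs a finite alphabet, so one must truncate $U$, control the truncation error uniformly (or at least in the right order) in $\varepsilon$, and only then send $\varepsilon\to 0$ followed by removing the truncation. One must also verify that $h(Z) > -\infty$ (together with $h(Z)$ assumed finite, i.e.\ also $h(Z)<\infty$) is genuinely what is needed for $h(U+\varepsilon Z)$ to be well-defined and for the decomposition $h(Y_\varepsilon) = h(\varepsilon Z) + I(U;Y_\varepsilon)$ to be valid with no $\infty - \infty$ ambiguity; a brief remark that $U+\varepsilon Z$ has a density (a mixture of translates of the density of $\varepsilon Z$) and that its differential entropy is finite under the stated hypotheses will close this gap. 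The rest is routine.
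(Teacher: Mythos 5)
Your reduction is exactly the paper's: both write $h(U+\varepsilon Z) = h(\varepsilon Z) + I(U;U+\varepsilon Z)$, use $h(\varepsilon Z)=h(Z)+d\log\varepsilon$ together with the trivial bound $I(U;U+\varepsilon Z)\le H(U)$, and thereby reduce the lemma to the lower bound $\liminf_{\varepsilon\to 0} I(U;U+\varepsilon Z)\ge H(U)$. Where you genuinely diverge is in proving that lower bound. The paper observes that $(U,U+\varepsilon Z)$ converges in distribution to $(U,U)$ and $P_U\otimes P_{U+\varepsilon Z}$ to $P_U\otimes P_U$, and invokes the weak lower semicontinuity of relative entropy to conclude $\liminf_{\varepsilon\to0} I(U;U+\varepsilon Z)\ge D(P_{U,U}\,\|\,P_U\otimes P_U)=H(U)$ in two lines, with no case analysis on the support. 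You instead construct an explicit decoder and apply Fano's inequality, truncating $U$ to its $K$ most probable atoms to obtain a finite alphabet. Your route is more elementary but carries a real technical burden that you correctly anticipated: the atoms of a countable discrete law on $\reals^d$ need not be separated, so the nearest-point decoder must be restricted to the truncated (finite, hence separated) support with a reject symbol, and one must check that the residual terms $\Prob[U\notin\{u_1,\dots,u_K\}]\log K$ and $H(U)-H(U_K)$ vanish as $K\to\infty$ --- which they do, using $H(U)<\infty$ and $p_i\le 1/i$ for the sorted probabilities. So your plan closes, but the semicontinuity argument buys all of this for free. Your side remarks are also sound: the $\log\varepsilon$ in the statement should indeed read $d\log\varepsilon$ on $\reals^d$ (harmless in the application to \prettyref{thm:converse}, where balancedness cancels this term), and the paper likewise reduces at the outset to the case $h(Z)<\infty$ to avoid any $\infty-\infty$ ambiguity.
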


The following lemma, proved in \prettyref{sec:Embedding}, allows us to blow up the Shannon entropy of linear combinations of discrete random variables arbitrarily.  

\begin{lemma}
\label{lmm:Embedding}
Let $U_1,\ldots,U_m$ be $\reals^d$-valued discrete random variables. Let $k \in \mathbb{N}$. Then for any $A=(a_{ij}) \in \reals^{n \times m}$, there exist $\reals^d$-valued discrete random variables $U_1^{(k)},\ldots,U_m^{(k)}$ such that
\begin{align*}
H \pth{\sum_{j=1}^m a_{ij} U_j^{(k)}} = k H \pth{\sum_{j=1}^m a_{ij} U_j}, \forall i \in [n].
\end{align*}
\end{lemma}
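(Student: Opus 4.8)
The statement to prove is \prettyref{lmm:Embedding}: given discrete $\reals^d$-valued random variables $U_1,\ldots,U_m$ and a matrix $A \in \reals^{n\times m}$, we want new discrete random variables $U_1^{(k)},\ldots,U_m^{(k)}$ so that the Shannon entropy of every linear combination $\sum_j a_{ij} U_j^{(k)}$ is exactly $k$ times that of $\sum_j a_{ij} U_j$. The natural device is tensorization: take $k$ independent copies of the whole family. Let $(U_1^{(\ell)},\ldots,U_m^{(\ell)})_{\ell=1}^k$ be $k$ i.i.d.\ copies of $(U_1,\ldots,U_m)$, living on $(\reals^d)^k \cong \reals^{dk}$, and set $\mathbf{U}_j \triangleq (U_j^{(1)},\ldots,U_j^{(k)})$. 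Since entropy is additive over independent coordinates, $H(\sum_j a_{ij}\mathbf{U}_j) = \sum_{\ell=1}^k H(\sum_j a_{ij} U_j^{(\ell)}) = k\, H(\sum_j a_{ij} U_j)$ for every $i$. This already gives the identity but at the cost of inflating the ambient dimension from $d$ to $dk$.

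\textbf{Returning to $\reals^d$.} To land back in $\reals^d$ as the lemma demands, I would compose with a linear map $\reals^{dk}\to\reals^d$ that is injective on the (countable) support of all the relevant random vectors. Concretely, the support of each $\mathbf{U}_j$ is a countable subset $S_j \subset \reals^{dk}$; the supports of the finitely many linear combinations $\sum_j a_{ij}\mathbf{U}_j$ are also countable. A generic linear map $T\colon\reals^{dk}\to\reals^d$ — equivalently, a generic $d\times dk$ matrix — is injective on any fixed countable set, because the set of matrices failing injectivity on a given pair of distinct points is a proper subspace (measure zero), and a countable union of measure-zero sets is still measure zero. Pick such a $T$ that is simultaneously injective on all of $S_j$ and, more importantly, on each of the countable sets arising from the combinations $\{\sum_j a_{ij}x_j : x_j \in S_j\}$ for $i\in[n]$; here one uses that $T$ commutes with the linear combination, so injectivity on the relevant sum-sets is what matters. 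Then define $U_j^{(k)} \triangleq T\mathbf{U}_j$. Because $T$ restricted to the support of $\sum_j a_{ij}\mathbf{U}_j$ is injective, $H(\sum_j a_{ij} U_j^{(k)}) = H(T\sum_j a_{ij}\mathbf{U}_j) = H(\sum_j a_{ij}\mathbf{U}_j) = k\,H(\sum_j a_{ij} U_j)$, as desired, and each $U_j^{(k)}$ is a discrete $\reals^d$-valued random variable.

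\textbf{Main obstacle.} The crux is the simultaneous-injectivity argument: one must be careful that a \emph{single} linear map $T$ works for \emph{all} $n$ of the combinations at once, not just one. This is handled by noting there are only finitely many ($n$, plus $m$ for the $U_j$'s themselves if one wants them injective too) countable bad sets, and for each, the set of "bad" matrices $T$ is a countable union of proper linear subspaces of the space of $d\times dk$ matrices, hence Lebesgue-null; a finite union of null sets is null, so almost every $T$ works — in particular some $T$ works. A minor subtlety worth a sentence: if some $\sum_j a_{ij} U_j$ is a.s.\ constant, its entropy is zero and the identity holds trivially for that $i$, so one may restrict attention to the nondegenerate combinations; and one should note the map $x\mapsto Tx$ being injective on the support of a discrete random variable is exactly what preserves its entropy. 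No heavy machinery is needed beyond this genericity observation.
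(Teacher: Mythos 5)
Your proposal is correct and follows essentially the same route as the paper: tensorize to $k$ i.i.d.\ copies so that entropies scale by $k$, then map back to $\reals^d$ by a linear map that is injective on the countable set supporting all the relevant linear combinations, which preserves entropy and commutes with the combinations. The only (immaterial) difference is how the injective map is produced — the paper uses the explicit family $f_M(x_1,\ldots,x_k)=\sum_i x_i M^i$ and notes that each pair of distinct points rules out at most $k$ values of $M$, whereas you take a generic $d\times dk$ matrix via a measure-zero argument.
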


We now prove \prettyref{thm:converse}.

\begin{proof}
Let $Z_j$ be independent $\reals^d$-valued Gaussian random variables with zero mean and $U_1,\ldots,U_m$ be independent $\reals^d$-valued discrete random variables. Let $U_1^{(k)},\ldots,U_m^{(k)}$ be independent $\reals^d$-valued discrete random variables such that $H\pth{\sum_{j=1}^m a_{ij}U_j^{(k)}}=kH\pth{\sum_{j=1}^m a_{ij}U_j}$ for each $i \in [n]$, guaranteed by \prettyref{lmm:Embedding}.

%
%


Let $\varepsilon>0$. For each $j \in [m]$, let $X_j=U_j^{(k)}+\varepsilon Z_j$. Then we have,
\begin{align*}
h\pth{X_{j}}= H(U_{j}^{(k)})+ h(Z_j)+\log \varepsilon +o_{\varepsilon}(1).
\end{align*}
Hence, for each $i \in [n]$,
\begin{align*}
h \pth{\sum_{j=1}^m a_{ij}X_{j}} &= h\pth{\sum_{j=1}^m a_{ij} U_{j}^{(k)} +\varepsilon  \sum_{j=1}^m a_{ij} Z_j} \\
& \stackrel{(a)}= H \pth{\add a_{ij} U_{j}^{(k)} } + h\pth{\add a_{ij}Z_j} +\log \varepsilon +o_\varepsilon(1)\\
&=k H \pth{\add a_{ij} U_j}+h\pth{\add a_{ij}Z_j} +\log \varepsilon +o_\varepsilon(1),
\end{align*}
where $(a)$ follows from \prettyref{lmm:NoisyMutual}.
Since $X_j$'s are independent, by assumption, $\sum_{i=1}^n \alpha_i h\pth{\add a_{ij}X_{j}} \leq 0$ where $\sum_{i=1}^n \alpha_i$. Hence,
\begin{align*}
k \sum_{i=1}^n \alpha_i H\pth{\add a_{ij} U_j}+ \sum_{i=1}^n \alpha_i h\pth{\add a_{ij} Z_j} + o_\varepsilon(1) \leq 0.
\end{align*}
Thus,
\begin{align*}
\sum_{i=1}^n \alpha_i H \pth{\add a_{ij}U_j} + \frac{ \sum_{i=1}^n \alpha_i h \pth{\add a_{ij}Z_j}}{k}+ \frac{o_\varepsilon(1)}{k} \leq 0.
\end{align*}
The proof is completed by letting $\varepsilon \rightarrow 0$ followed by $k \rightarrow \infty$.
\end{proof}

\section{Proofs of lemmas}
\label{sec:pflmm}

\subsection{Proof of \prettyref{lmm:implmm1}}
Let $a_1,\ldots,a_m \in \Z$ and $X_1,\ldots,X_m$ be $\mathbb{R}^d$-valued random variables. Then
\begin{align*}
\left[\sum_{i=1}^m a_i X_i \right]_k  
&= \frac{\Big \lfloor 2^k \sum_{i=1}^m a_i X_i \Big \rfloor}{2^k} =\frac{ \Big \lfloor \sum_{i=1}^m a_i \lfloor 2^kX_i \rfloor + \lfloor \sum_{i=1}^m a_i \{ 2^kX_i \}  \Big \rfloor}{2^k} \\
&= \sum_{i=1}^m a_i [X_i]_k +  \frac{\lfloor \sum_{i=1}^m a_i \{ 2^kX_i \}  \rfloor}{2^k}.
\end{align*}
Define
\begin{align*}
A_k \triangleq 2^k \left[\sum_{i=1}^m a_iX_i \right]_k, \quad 
B_k \triangleq 2^k \sum_{i=1}^m a_i \left[X_i \right]_k, \quad 
Z_k \triangleq \Bigg \lfloor \sum_{i=1}^m a_i \{ 2^kX_i \} \Bigg \rfloor.
\end{align*}
It is easy to see that $A_k, B_k, Z_k \in \mathbb{Z}^d$ and $A_k=B_k+Z_k$. Since $\{2^k X\} \in [0,1)^d$, each component of $Z_k$ takes integer values in the set $a_1[0,1)+\ldots+a_m[0,1)$ and hence $Z_k \in \mathcal{Z} \triangleq \{a,a+1,\ldots,b-1\}^d$, where $b \triangleq \sum_{i=1}^m a_i \mathbbm{1}_{\{a_i >0\}}$ and $a \triangleq \sum_{i=1}^m a_i \mathbbm{1}_{\{a_i <0\}}$. Hence $Z_k$ takes at most $(b-a)^d$ values, which is bounded for all $k$.

Next we describe the outline of the proof: 
\begin{enumerate}
	\item The goal is to prove $|H (A_k) - H(B_k)| \to 0$. Since $A_k=B_k+Z_k$, we have
	\begin{align}
H\left(A_k\right)-H\left(B_k\right)=I\left(Z_k;A_k\right)-I\left(Z_k;B_k\right). \label{eq:star}
\end{align}
Hence it suffices to show that both mutual informations vanish as $k\diverge$.

\item 
\prettyref{lmm:mutual2} proves $I\left(Z_k;B_k\right)\to 0$ based on the data processing inequality and \prettyref{lmm:intfracindep} which asserts that asymptotic independence between the integral part $\lfloor 2^kX \rfloor$ and the fractional part $ \{2^kX\}$, in the sense of vanishing mutual information. 
As will be evident in the proof of \prettyref{lmm:intfracindep}, this is a direct consequence of R\'enyi's result (\prettyref{lmm:lmmRenyi}).

\item 
Since $Z_k$ takes a bounded number of values, $I(Z_k;A_k)\to0$ is \emph{equivalent} to the total variation between $P_{Z_k,A_k}$ and $P_{Z_k}\otimes P_{A_k}$ vanishes, known as the $T$-information \cite{Csiszar96,PW14a}. 
By the triangle inequality and data processing inequality for the total variation, this objective is further reduced to proving the convergence of two pairs of conditional distributions in total variation: one is implied by Pinsker's inequality and \prettyref{lmm:mutual2}, and the other one follows from an elementary fact on the total variation between a pdf and a small shift of itself (\prettyref{lmm:TVshift}).
\prettyref{lmm:mutual1} contains the full proof; notably, the argument crucially depends on the assumption that $a_1,\ldots,a_m$ are relatively prime.


\end{enumerate}

We start with the following auxiliary result.
\begin{lemma}
\label{lmm:intfracindep}
Let $X$ be a $\left[0,1\right]^d$-valued continuous random variable such that both $h \left(X \right)$ and $H \left(\lfloor X \rfloor \right)$ are finite. Then
\begin{align*}
\lk I (\lfloor 2^kX \rfloor; \{2^kX\})=0.
\end{align*}
\end{lemma}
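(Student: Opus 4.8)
The plan is to decompose $[0,1]^d$-valued $X$ into its fine integer part $\lfloor 2^k X\rfloor$ and fractional part $\{2^k X\}$, and show that these become asymptotically independent. Write $X = [X]_k + \{X\}_k$, where $2^k[X]_k = \lfloor 2^k X\rfloor \in \{0,1,\ldots,2^k-1\}^d$ and $\{X\}_k = 2^{-k}\{2^k X\} \in [0,2^{-k})^d$. The key identity is the chain rule for entropy applied to this bijective decomposition: knowing the pair $(\lfloor 2^k X\rfloor, \{2^k X\})$ is equivalent to knowing $X$ (up to the fixed scaling by $2^k$), so
\begin{align*}
I(\lfloor 2^k X\rfloor; \{2^k X\}) = H(\lfloor 2^k X\rfloor) + h(\{2^k X\}) - h(2^k X),
\end{align*}
where I am using that for a discrete variable $D$ and a continuous variable $C$ with $(D,C)$ determining a continuous variable, the ``mixed'' mutual information is $H(D) + h(C) - h(D,C)$ and here $h(D,C)$ collapses to $h(2^k X) = h(X) + dk\log 2$ because $D$ is a deterministic function of the continuous part after scaling. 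More carefully: $2^k X = \lfloor 2^k X\rfloor + \{2^k X\}$, and since $\lfloor 2^k X\rfloor$ is determined by $2^k X$, we have $h(2^k X) = h(\{2^k X\}) + H(\lfloor 2^k X\rfloor \mid \{2^k X\})$, which rearranges to exactly the displayed formula for the mutual information.

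The next step is to evaluate each term asymptotically. By R\'enyi's result (\prettyref{lmm:lmmRenyi}) applied to $X$, we have $H([X]_k) - dk\log 2 \to h(X)$, i.e. $H(\lfloor 2^k X\rfloor) = dk\log 2 + h(X) + o(1)$. For the fractional term, $h(\{2^k X\}) = h(2^{-k}\{2^k X\}\cdot 2^k) = h(\{X\}_k) + dk\log 2$; I would like to argue $h(\{X\}_k) \to 0$, or at least track it exactly. Actually the cleanest route avoids isolating $h(\{X\}_k)$: since $h(2^k X) = h(X) + dk\log 2$ exactly, substituting into the mutual information formula gives
\begin{align*}
I(\lfloor 2^k X\rfloor; \{2^k X\}) = \big(dk\log 2 + h(X) + o(1)\big) + h(\{2^k X\}) - \big(h(X) + dk\log 2\big) = h(\{2^k X\}) + o(1).
\end{align*}
So it remains to show $h(\{2^k X\}) \to 0$. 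But wait — $\{2^k X\}$ takes values in $[0,1)^d$, so $h(\{2^k X\}) \le 0$ with equality iff uniform; and applying R\'enyi's expansion to the variable $\{2^k X\}$... this is getting circular. Instead I would apply R\'enyi once more: $h(\{2^k X\}) = h(2^{-k}\{2^k X\}) - dk\log 2 = h(\{X\}_k) - dk \log 2$, and then note $h(\{X\}_k) = h(X - [X]_k)$. Hmm. The truly clean path: mutual information is nonnegative, so it suffices to show $\limsup I(\lfloor 2^k X\rfloor;\{2^k X\}) \le 0$, equivalently $\limsup h(\{2^k X\}) \le 0$, which is automatic since $\{2^k X\} \in [0,1)^d$ forces $h(\{2^k X\}) \le 0$. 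Combined with $I \ge 0$ and $I = h(\{2^k X\}) + o(1)$, we get $h(\{2^k X\}) \to 0$ and hence $I \to 0$.

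\textbf{Main obstacle.} The delicate point is justifying the finiteness of all the entropies involved and the validity of the chain-rule manipulation $h(2^k X) = h(\{2^k X\}) + H(\lfloor 2^k X\rfloor \mid \{2^k X\})$ — in particular that the conditional entropy $H(\lfloor 2^k X\rfloor \mid \{2^k X\})$ is well-defined and that no $\infty - \infty$ arises; this is where the hypotheses $h(X)$ finite and $H(\lfloor X\rfloor)$ finite are used (the latter, together with $X$ bounded, controls $H(\lfloor 2^k X\rfloor)$ via R\'enyi). I would also need to confirm that R\'enyi's theorem applies to $X$ in the form stated, i.e. that $H(\lfloor X\rfloor) < \infty$ is the correct hypothesis guaranteeing the expansion; \prettyref{lmm:lmmRenyi} is stated with exactly these hypotheses, so this is immediate. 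Everything else is bookkeeping: the scaling identity $h(aX) = h(X) + d\log a$ and nonnegativity of mutual information do the rest.
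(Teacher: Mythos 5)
Your proof is correct, and it reaches the same essential bound as the paper --- namely $I(\lfloor 2^kX\rfloor;\{2^kX\})\le H(\lfloor 2^kX\rfloor)-dk\log 2-h(X)=o_k(1)$ by R\'enyi's lemma --- but by a different route. The paper expands $X$ into its binary digits, identifies $\lfloor 2^kX\rfloor$ with $(X_1,\dots,X_k)$ and $\{2^kX\}$ with the tail, writes the mutual information as $\lim_m I(X_1^k;X_{k+1}^{k+m})$, and telescopes the R\'enyi defects $a_k=H(X_1^k)-dk\log 2-h(X)$ using $H(X_{k+1}^{k+m})\le md\log 2$; this keeps everything purely discrete. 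You instead invoke the mixed discrete--continuous chain rule $I(D;C)=H(D)+h(C)-h(D+C)$ for $D=\lfloor 2^kX\rfloor$, $C=\{2^kX\}$, which is legitimate here because $y\mapsto(\lfloor y\rfloor,\{y\})$ is a measure-preserving bijection, and then use $h(\{2^kX\})\le 0$ together with $I\ge 0$. Your version is shorter and avoids the limit over $m$, at the price of the well-definedness issue you flag at the end; that issue does close cleanly: $H(\lfloor 2^kX\rfloor)\le d\log(2^k+1)<\infty$ since $X$ is bounded, $h(2^kX)=h(X)+dk\log2$ is finite by hypothesis, and
\begin{align*}
h(\{2^kX\}) \;\ge\; h(\{2^kX\}\mid \lfloor 2^kX\rfloor) \;=\; h(2^kX)-H(\lfloor 2^kX\rfloor) \;>\;-\infty,
\end{align*}
by concavity of differential entropy, so $h(\{2^kX\})\in(-\infty,0]$ and no $\infty-\infty$ occurs. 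As a bonus your argument shows $h(\{2^kX\})\to 0$, an entropic form of the quantization-error-converges-to-uniform fact that the paper cites from \cite{JWW07} in the proof of \prettyref{lmm:mutual1}.
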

\begin{proof}
Since $X \in [0,1]^d$, we can write $X$ in terms of its binary expansion as:
 $$ X= \sum_{i \geq 1} X_i 2^{-i},X_i \in \{0,1\}^d.$$ 
In other words, $\lfloor 2^kX\rfloor =2^{k-1}X_1+\ldots+X_k$. Thus, $\lfloor 2^kX\rfloor$ and $\left(X_1,\ldots,X_k \right)$ are in a one-to-one correspondence and so are  $\{2^kX \}$ and $\left(X_{k+1},\ldots\right)$.
So, 
\begin{align*}
I (\lfloor 2^kX\rfloor; \{2^kX\} )&= I (X_1^k;X_{k+1}^{\infty}) \triangleq I\left(X_1,\ldots,X_k;X_{k+1},\ldots\right).
\end{align*}
Then $I\left(X_1^k;X_{k+1}^{\infty}\right)= \lim_{m \rightarrow \infty}I(X_1^k;X_{k+1}^{k+m})$ cf.~\cite[Section 3.5]{PW-it}.
Let $a_k \triangleq H\left(X_1^k\right) - dk\log2 - h\left(X\right)$.
Then \prettyref{lmm:lmmRenyi} implies $ \lim_{k \rightarrow \infty} a_k=0$. Hence for each $k,m \geq 1$, we have
\begin{align}
\nonumber
I (X_1^k;X_{k+1}^{k+m}) &= H (X_1^k )+H (X_{k+1}^{k+m} )-H (X_1^{k+m}) &\\
\nonumber
&=h(X)+d k \log2+ a_k - (h(X)+d(k+m)\log2+a_{k+m}) + H(X_{k+1}^{k+m})&\\
\nonumber
&=a_k-a_{k+m}+H(X_{k+1}^{k+m})- md\log2 &\\
& \leq a_k - a_{k+m}, \label{eq:binaryjustification}
\end{align}
where \eqref{eq:binaryjustification} follows from the fact that $X_{k+1}^{k+m}$ can take only $2^{md}$ values. 
Since $I(X_1^k;X_{k+1}^{k+m})\geq 0$, by \eqref{eq:binaryjustification}, sending $m \rightarrow \infty$ first and then $k \rightarrow \infty$ completes the proof.
\end{proof}

Recall that the total variation distance between probability distributions $\mu$ and $\nu$ is defined as:
\begin{align*}
\TV \left(\mu,\nu \right)\triangleq \sup_{F} | \mu(F)-\nu(F) | ,
\end{align*}
where the supremum is taken over all measurable sets $F$.

\begin{lemma}
\label{lmm:conditionalTV}
Let $X,Y,Z$ be random variables such that $Z= f \left(X\right)=f\left(Y\right)$, for some measurable function $f$. Then for any measurable $E$ such that $\prob {Z \in E }>0$,
$$
\TV \left( P_{X|Z \in E},P_{Y|Z\in E} \right) \leq \frac{\TV \left( P_X,P_Y \right)}{\prob{Z \in E }}. 
$$
\end{lemma}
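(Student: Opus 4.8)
The plan is to reduce the conditional total variation to the unconditional one by directly expanding both sides as integrals over the event $\{Z \in E\}$. First I would note that since $Z = f(X) = f(Y)$, the event $\{X : f(X) \in E\}$ has positive probability under $P_X$ and equals $\{Z \in E\}$ in distribution; similarly for $Y$. So the conditional laws $P_{X \mid Z \in E}$ and $P_{Y \mid Z \in E}$ are simply the restrictions of $P_X$ and $P_Y$ to the set $G \triangleq f^{-1}(E)$, renormalized by $P_X(G) = P_Y(G) = \prob{Z \in E}$ (the two normalizing constants coincide precisely because $f(X)$ and $f(Y)$ have the same law on $E$ — this is where the hypothesis $Z = f(X) = f(Y)$ is used).

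Next I would write, for an arbitrary measurable set $F$,
\begin{align*}
P_{X \mid Z \in E}(F) - P_{Y \mid Z \in E}(F) = \frac{P_X(F \cap G) - P_Y(F \cap G)}{\prob{Z \in E}}.
\end{align*}
Since $F \cap G$ is itself a measurable set, $|P_X(F \cap G) - P_Y(F \cap G)| \leq \TV(P_X, P_Y)$ by definition of total variation. Taking the supremum over all measurable $F$ on the left-hand side gives exactly
\begin{align*}
\TV\pth{P_{X \mid Z \in E}, P_{Y \mid Z \in E}} \leq \frac{\TV(P_X, P_Y)}{\prob{Z \in E}},
\end{align*}
which is the claim.

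This argument is essentially a one-line computation once the setup is correct, so there is no real obstacle; the only subtlety worth spelling out carefully is the justification that both conditional distributions are renormalized restrictions to the \emph{same} set $G = f^{-1}(E)$ with the \emph{same} normalizing constant — both facts follow from $Z = f(X) = f(Y)$ holding as an identity of random variables (equivalently, that $X$ and $Y$ are coupled so that applying $f$ yields the same $Z$), which forces $\prob{f(X) \in E} = \prob{f(Y) \in E} = \prob{Z \in E}$. I would also remark that one could alternatively phrase the proof via densities or Radon–Nikodym derivatives against a dominating measure, but the direct set-function argument above is cleanest and avoids any absolute-continuity assumptions.
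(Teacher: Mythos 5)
Your proof is correct and follows essentially the same route as the paper's: both express $P_{X\mid Z\in E}(F)-P_{Y\mid Z\in E}(F)$ as $\bigl(P_X(F\cap f^{-1}(E))-P_Y(F\cap f^{-1}(E))\bigr)/\prob{Z\in E}$, bound the numerator by $\TV(P_X,P_Y)$, and take the supremum over $F$. Your explicit remark that the hypothesis $Z=f(X)=f(Y)$ is what makes the two normalizing constants coincide is a nice clarification but not a different argument.
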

\begin{proof}
For any measurable $F$,
\begin{align*}
\left| P_{X \in F |Z \in E} - P_{Y \in F|Z\in E} \right|= \frac{\left|\prob {X \in F, f \left(X\right) \in E}- \prob { Y \in F, f \left(Y\right) \in E}\right|}{\prob{Z \in E}} \leq \frac{\TV \left(P_X,P_Y \right) }{\prob {Z \in E }}.
\end{align*}
The claim now follows from taking supremum over all $F$.
\end{proof}

\begin{lemma}
\label{lmm:TVshift}
If $X$ is a $\mathbb{R}$-valued continuous random variable, then:
\begin{align*}
\TV ( P_X, P_{X+a}) \rightarrow 0 \mbox{ as } a \rightarrow 0.
\end{align*}
\end{lemma}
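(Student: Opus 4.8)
The statement to prove is \prettyref{lmm:TVshift}: for a continuous $\reals$-valued random variable $X$ with pdf $f$, we have $\TV(P_X, P_{X+a}) \to 0$ as $a \to 0$. The clean approach is via the $L^1$ characterization of total variation: for two absolutely continuous distributions with densities $f$ and $g$, $\TV(P_X, P_{X+a}) = \frac{1}{2}\|f - g\|_1$, and here $g(x) = f(x-a)$ is the density of $X+a$. So the whole thing reduces to showing $\|f - f(\cdot - a)\|_1 \to 0$ as $a \to 0$, i.e.\ the continuity of translation in $L^1(\reals)$.

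\textbf{Key steps, in order.} First I would recall/invoke the identity $\TV(P_X,P_{X+a}) = \frac{1}{2}\int_{\reals} |f(x) - f(x-a)|\,dx$, which is the standard sche. Second, I would prove the $L^1$-continuity of translation by the usual density argument: the space $C_c(\reals)$ of compactly supported continuous functions is dense in $L^1(\reals)$, so given $\epsilon > 0$ pick $\phi \in C_c(\reals)$ with $\|f - \phi\|_1 < \epsilon/3$; then by the triangle inequality
\begin{align*}
\|f - f(\cdot - a)\|_1 \le \|f - \phi\|_1 + \|\phi - \phi(\cdot - a)\|_1 + \|\phi(\cdot - a) - f(\cdot - a)\|_1,
\end{align*}
where the first and third terms are each $< \epsilon/3$ (the third by translation-invariance of the Lebesgue integral), and the middle term $\to 0$ as $a \to 0$ because $\phi$ is uniformly continuous with compact support (dominated convergence, or a direct estimate: $|\phi(x) - \phi(x-a)| \le \omega_\phi(|a|)$ uniformly and the integrand is supported on a fixed bounded set for $|a| \le 1$). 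Hence $\limsup_{a\to 0} \|f - f(\cdot-a)\|_1 \le \epsilon$, and since $\epsilon$ was arbitrary the limit is $0$.

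\textbf{Main obstacle.} There is no real obstacle here; the result is a textbook fact (strong continuity of the translation group on $L^1$), and the only "work" is citing or reproving density of $C_c$ in $L^1$ and the uniform-continuity estimate for $\phi$. If one wanted to avoid invoking $L^1$-translation-continuity as a black box, the one point to be slightly careful about is justifying that $\TV$ of two absolutely continuous measures equals $\frac12$ of the $L^1$ distance of densities — but this is immediate from $\mu(F) - \nu(F) = \int_F (f-g)$ and taking $F = \{f > g\}$ to achieve the supremum. So I would keep the proof to two or three lines invoking the $L^1$ identity and the standard approximation argument.
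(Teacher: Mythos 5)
Your proposal is correct and follows essentially the same route as the paper: reduce $\TV(P_X,P_{X+a})$ to the $L^1$ distance between $f$ and its translate, approximate $f$ by a compactly supported continuous function, and use uniform continuity plus the triangle inequality. The only (harmless) addition is that you spell out the $\TV = \tfrac{1}{2}\|f-g\|_1$ identity, which the paper leaves implicit.
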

\begin{proof}
Let $f$ be the pdf of $X$. Since continuous functions with compact support are dense in $\mathcal{L}^1(\mathbb{R})$, for any $\varepsilon>0$, there exists a continuous and compactly supported function $g$ such that $\|f-g\|_1 < \frac{\varepsilon}{3}$. Because of the uniform continuity of continuous functions on compact sets, there exists a $\delta>0$ such that, whenever $|a| < \delta$, $\|g(\cdot+a)-g(\cdot)\|_1 < \frac{\varepsilon}{3}$. Hence $\|f(\cdot+a)-f(\cdot)\|_1< 2 \|f(\cdot)-g(\cdot)\|_1+ \|g(\cdot+a)-g(\cdot)\|_1 < \varepsilon$. Hence the claim follows.
\end{proof}

\begin{lemma}
\label{lmm:mutual2}
If $X_1,\ldots,X_m $ are independent $\left[0,1\right]^d$-valued continuous random variables such that both $h \left(X_j \right)$ and $H \left( \lfloor X_j \rfloor  \right)$ are finite for each $j \in \left[m\right]$, then
$$ 
\lim_{k \rightarrow \infty} I \left(Z_k;B_k\right)=0.
$$
\end{lemma}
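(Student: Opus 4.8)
The plan is to bound $I(Z_k;B_k)$ by the mutual information between the fractional parts and the integral parts of the individual $X_j$'s, and then invoke Lemma \ref{lmm:intfracindep} together with the independence of the $X_j$'s. Recall $B_k = 2^k\sum_{j=1}^m a_j[X_j]_k = \sum_{j=1}^m a_j\lfloor 2^kX_j\rfloor$ and $Z_k = \lfloor \sum_{j=1}^m a_j\{2^kX_j\}\rfloor$. The key observation is that $Z_k$ is a deterministic function of the fractional-part vector $(\{2^kX_1\},\ldots,\{2^kX_m\})$, while $B_k$ is a deterministic function of the integral-part vector $(\lfloor 2^kX_1\rfloor,\ldots,\lfloor 2^kX_m\rfloor)$.

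First I would apply the data processing inequality for mutual information twice, once on each side:
\begin{align*}
I(Z_k;B_k) \le I\big(\{2^kX_1\},\ldots,\{2^kX_m\}\,;\,\lfloor 2^kX_1\rfloor,\ldots,\lfloor 2^kX_m\rfloor\big).
\end{align*}
Next, since $X_1,\ldots,X_m$ are independent, the pairs $\big(\lfloor 2^kX_j\rfloor,\{2^kX_j\}\big)$ are mutually independent across $j$. A standard tensorization identity for mutual information under such a product structure — namely that for independent pairs $(A_j,C_j)$ one has $I(A_1,\ldots,A_m; C_1,\ldots,C_m) = \sum_{j=1}^m I(A_j;C_j)$ — then gives
\begin{align*}
I\big(\{2^kX_1\},\ldots,\{2^kX_m\}\,;\,\lfloor 2^kX_1\rfloor,\ldots,\lfloor 2^kX_m\rfloor\big) = \sum_{j=1}^m I\big(\lfloor 2^kX_j\rfloor\,;\,\{2^kX_j\}\big).
\end{align*}
Finally, by Lemma \ref{lmm:intfracindep}, each term $I\big(\lfloor 2^kX_j\rfloor;\{2^kX_j\}\big)\to0$ as $k\to\infty$ (the hypotheses of that lemma — finiteness of $h(X_j)$ and $H(\lfloor X_j\rfloor)$ — are exactly what is assumed here), so the finite sum vanishes and hence $I(Z_k;B_k)\to0$.

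The only genuinely delicate point is justifying the tensorization step rigorously: one must check that the independence of the $(\lfloor 2^kX_j\rfloor,\{2^kX_j\})$ pairs (which follows from independence of the $X_j$'s, since each pair is a function of $X_j$ alone) indeed yields additivity of the mutual information, and that all quantities involved are well-defined and finite — this uses $H(\lfloor 2^kX_j\rfloor)<\infty$, which holds since $\lfloor 2^kX_j\rfloor$ determines $\lfloor X_j\rfloor$ plus $dk$ bounded bits, so its entropy is finite whenever $H(\lfloor X_j\rfloor)$ is. Beyond this bookkeeping, the argument is essentially a two-line chain of DPI, tensorization, and Lemma \ref{lmm:intfracindep}, so I do not anticipate a substantial obstacle.
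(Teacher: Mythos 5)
Your proposal is correct and follows essentially the same route as the paper: data processing on both arguments (since $Z_k$ is a function of the fractional parts and $B_k$ of the integral parts), additivity of mutual information across the independent pairs $(\lfloor 2^kX_j\rfloor,\{2^kX_j\})$, and then Lemma~\ref{lmm:intfracindep} term by term. The paper carries the coefficients $a_j$ through the intermediate expressions, but this is immaterial and the argument is otherwise identical.
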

\begin{proof}We have
\begin{align*}
I (Z_k;B_k ) &=  I \Big( \Big \lfloor \sum_{i=1}^m a_i \{ 2^kX_i \}  \Big \rfloor     ;  \sum_{i=1}^m a_i \lfloor 2^kX_i \rfloor  \Big) &\\
&=  I \Big( \Big \lfloor \sum_{i=1}^m a_i \{ 2^kX_i \}  \Big \rfloor     ; \Big \lfloor \sum_{i=1}^m a_i \lfloor 2^kX_i \rfloor \Big \rfloor  \Big) &\\
& \stackrel{(a)} \leq I \big(   a_1 \{ 2^kX_1 \},\ldots, a_m \{ 2^kX_m \}         ;  a_1 \lfloor 2^kX_1 \rfloor,\ldots,  a_m \lfloor 2^kX_m \rfloor \big) &\\
& \stackrel{(b)}= \sum_{i=1}^m I (  \{ 2^kX_i \} ;  \lfloor 2^kX_i \rfloor ),
\end{align*}
where $(a)$ follows from the data processing inequality and $(b)$ follows from the fact that $X_1,\ldots,X_m$ are independent. Applying \prettyref{lmm:intfracindep} to each $X_i$ finishes the proof.
\end{proof}

 In view of \eqref{eq:star}, \prettyref{lmm:implmm1} follows from \prettyref{lmm:mutual2} and the next lemma:
\begin{lemma}
\label{lmm:mutual1}
Under the assumptions of \prettyref{lmm:mutual2} and if $a_1,\ldots,a_m \in \Z$ are relatively prime,
$$
\lk I(Z_k;A_k) =0.
$$
\end{lemma}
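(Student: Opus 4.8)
The quantity to control is $I(Z_k;A_k)$ where $A_k = B_k + Z_k$, $Z_k$ takes values in the fixed finite set $\ZC = \{a,\ldots,b-1\}^d$, and $A_k = 2^k[\sum_i a_i X_i]_k$. Since $Z_k$ has bounded support, $I(Z_k;A_k)\to 0$ is equivalent to $\TV(P_{Z_k,A_k}, P_{Z_k}\otimes P_{A_k})\to 0$; more precisely it suffices to show that for each fixed $z$ in the (finite) support, $\TV(P_{A_k\mid Z_k=z}, P_{A_k})\to 0$. The idea is to compare both of these conditional laws to the law of $B_k$. Intuitively, conditioning on $Z_k = z$ only shifts $A_k$ by $z$ relative to $B_k$ and mildly reweights it; because $A_k$ and $B_k$ are discretizations at scale $2^{-k}$ of a continuous random variable, a shift by the fixed vector $z$ becomes a shift by $2^{-k}z \to 0$ after rescaling, which is negligible in total variation by \prettyref{lmm:TVshift}. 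The relative prime hypothesis on $a_1,\ldots,a_m$ enters precisely here: it guarantees that $B_k/2^k = \sum_i a_i[X_i]_k$ ranges over (a coset of) all of $\frac{1}{2^k}\Z^d$ rather than a sublattice, so that the push-forward of $B_k$ under the scaling $x\mapsto x/2^k$ genuinely approximates the continuous law of $\sum_i a_i X_i$, making the shift-by-$2^{-k}z$ argument legitimate.

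\textbf{Step 1: reduce to conditional total variations.} Write $I(Z_k;A_k) = \sum_{z\in\ZC} P_{Z_k}(z)\, D(P_{A_k\mid Z_k=z}\,\|\,P_{A_k})$. Since $|\ZC|$ is bounded uniformly in $k$, it is enough to show $D(P_{A_k\mid Z_k=z}\,\|\,P_{A_k})\to 0$ for each fixed $z$ with $\liminf_k P_{Z_k}(z)>0$ (for the other $z$ the contribution is handled by boundedness of the divergence against... — more cleanly, pass to the $T$-information / total-variation formulation and use that $\sum_z P_{Z_k}(z)\TV(P_{A_k\mid Z_k=z},P_{A_k})\to 0$ implies $I\to 0$ when the alphabet is finite, via the bound $I \le \log(|\ZC|)\cdot(\text{something})$, or directly: total variation convergence of the joint to the product plus finite alphabet gives mutual-information convergence). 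So it suffices to prove, for each fixed $z\in\ZC$,
$$
\TV\pth{P_{A_k\mid Z_k=z},\ P_{B_k}} \to 0 \quad\text{and}\quad \TV\pth{P_{A_k},\ P_{B_k}}\to 0,
$$
and then invoke the triangle inequality together with the data-processing/averaging inequality $\TV(P_{A_k},P_{B_k}) \le \sum_z P_{Z_k}(z)\TV(P_{A_k\mid Z_k=z},P_{B_k})$, so that the second follows from the first.

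\textbf{Step 2: the shift argument.} Fix $z\in\ZC$. On the event $\{Z_k=z\}$ we have $A_k = B_k + z$, hence $P_{A_k\mid Z_k=z} = (P_{B_k\mid Z_k=z}) * \delta_z$, i.e. the law of $B_k$ conditioned on $\{Z_k=z\}$, shifted by the integer vector $z$. Two things must be shown: (i) conditioning $B_k$ on $\{Z_k=z\}$ does not perturb its law much — this is exactly the content of $I(Z_k;B_k)\to 0$ (\prettyref{lmm:mutual2}) via Pinsker and \prettyref{lmm:conditionalTV} (with $E=\{z\}$, noting $Z_k$ is a function of the joint and $\prob{Z_k=z}$ is bounded below along the relevant subsequence; $z$ with vanishing probability contribute negligibly); (ii) the law of $B_k$ is insensitive to an integer shift $z$ at scale $2^{-k}$. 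For (ii), note $B_k = 2^k \sum_i a_i [X_i]_k$, so $2^{-k} B_k = \sum_i a_i [X_i]_k \to \sum_i a_i X_i =: Y$ almost surely and in total variation once we know the limiting law $P_Y$ is continuous — which it is, being a linear combination with a nonzero coefficient of an independent continuous random variable (here the relative-primeness ensures at least one $a_i\neq 0$, and more importantly that no arithmetic degeneracy forces $2^{-k}B_k$ onto a sparse set; a clean way is: $P_{2^{-k}B_k}$ is the law of $Y$ "rounded" appropriately and $\TV(P_{2^{-k}B_k}, P_Y)\to 0$ by dominated convergence / Scheffé since the densities of the rounded versions converge). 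Then $\TV(P_{A_k},P_{B_k}) = \TV(P_{2^{-k}A_k}, P_{2^{-k}B_k})$, and $2^{-k}A_k = [\,Y\,]_k + 2^{-k}Z_k$, so $2^{-k}A_k$ and $2^{-k}B_k$ are both within $O(2^{-k})$ of $Y$; comparing via \prettyref{lmm:TVshift} applied to $P_Y$ (and absorbing the $[\cdot]_k$-rounding the same way) gives $\TV\to 0$.

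\textbf{The main obstacle.} The delicate point is Step 2(ii): making rigorous that shifting $B_k$ by a fixed integer vector is asymptotically invisible in total variation. One cannot shift $P_{B_k}$ directly — it is a discrete law on $\Z^d$ and $\TV(P_{B_k}, P_{B_k+z}) = 1$ would typically hold if $B_k$ lived on a sublattice not containing $z$. This is exactly why the hypothesis that $a_1,\ldots,a_m$ are relatively prime is indispensable: it ensures (together with the continuity of the $X_i$, which smears each $[X_i]_k$ across residues) that $B_k \bmod 2^k$-lattice structure is not trapped in a coset avoiding $z$, so that after rescaling by $2^{-k}$ the laws $P_{2^{-k}B_k}$ and $P_{2^{-k}B_k + 2^{-k}z}$ both converge to the same continuous law $P_Y$ and their total variation distance vanishes. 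I would spell out the rounding-to-continuous convergence $\TV(P_{[Y]_k + 2^{-k}Z_k}, P_Y)\to 0$ carefully — via Scheffé's lemma after writing the density of $[Y]_k + c$ (for $|c| = O(2^{-k})$) as a local average of $f_Y$ — since this is where relative-primeness must be converted from an arithmetic statement into the analytic statement that drives the total-variation estimate.
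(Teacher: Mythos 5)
Your overall architecture matches the paper's: reduce $I(Z_k;A_k)\to 0$ to vanishing total variation using the fact that $Z_k$ has a bounded alphabet, handle the effect of conditioning on $\{Z_k=z\}$ via Pinsker's inequality together with \prettyref{lmm:mutual2}, and then argue that the residual displacement by $z$ is asymptotically invisible. The gap is in your Step 2(ii), and it is not cosmetic. You assert $\TV(P_{2^{-k}B_k},P_Y)\to 0$ ``by Scheff\'e since the densities of the rounded versions converge,'' but $2^{-k}B_k$ is supported on the lattice $2^{-k}\Z^d$ while $Y=\sum_i a_iX_i$ is continuous, so $\TV(P_{2^{-k}B_k},P_Y)=1$ for every $k$, and the rounded versions have no Lebesgue densities. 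Likewise, ``$2^{-k}A_k$ and $2^{-k}B_k$ are both within $O(2^{-k})$ of $Y$'' gives nothing in total variation (the variable $[Y]_k$ is within $2^{-k}$ of $Y$ pointwise yet at total variation distance $1$ from it), and weak convergence of two sequences to a common continuous limit does not imply that the total variation distance between the two sequences vanishes. So the key estimate, that an integer shift of $B_k$ by $z'-z$ is negligible in total variation, is not actually established by your argument.

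The repair --- which is the actual content of the paper's proof and the only place where relative primeness is used quantitatively --- is to stay entirely on the discrete side and pull the integer shift back through the quantization to the continuous inputs. By B\'ezout's identity applied componentwise, relative primeness of $a_1,\ldots,a_m$ yields $b_1,\ldots,b_m\in\Z^d$ with $z'-z=\sum_i a_ib_i$, whence $B_k+(z'-z)=\sum_i a_i\floor{2^k(X_i+b_i2^{-k})}$ is the image, under the very same measurable map, of the inputs $X_i$ shifted by $b_i2^{-k}$; moreover $Z_k$ is the same function of the shifted and unshifted inputs. Data processing for total variation, together with \prettyref{lmm:conditionalTV} to carry the conditioning on $\{Z_k=z'\}$, then bounds the relevant distance by $\frac{1}{\prob{Z_k=z'}}\sum_i\TV(P_{X_i+b_i2^{-k}},P_{X_i})$, which vanishes by \prettyref{lmm:TVshift}. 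You correctly sensed that relative primeness prevents $B_k$ from being trapped on a sublattice missing $z'-z$, but that intuition must be converted into this B\'ezout/pullback argument; the ``rescale, converge to the continuous law, and compare there'' route cannot work in total variation. A secondary point: the paper compares $P_{A_k|Z_k=z}$ with $P_{A_k|Z_k=z'}$ for $z,z'$ in the set $\ZC'$ of asymptotically charged atoms, rather than with the unconditional law, precisely so that the shift argument runs under a single conditioning event of non-vanishing probability; your version would need the same care in discarding the $z$ with $\prob{Z_k=z}\to 0$.
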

%

\begin{proof}
Define the $T$-information between two random variables $X$ and $Y$ as follows:
$$
T(X;Y) \triangleq \TV(P_{XY}, P_X P_Y).
$$
By \cite[Proposition 12]{PW14a}, if a random variable $W$ takes values in a finite set $\mathcal{W}$, then
\begin{align}
I(W;Y) \leq  \log(|\mathcal{W}|-1)T(W;Y)+ h(T(W;Y)), \label{eq:yuwuineq}
\end{align}
where $h(x)=x \log \frac{1}{x} +(1-x) \log \frac{1}{1-x}$ is the binary entropy function.

Since $Z_k$ takes at most $\left(b-a\right)^d$ values, by \eqref{eq:yuwuineq}, it suffices to prove that $\lk T(Z_k;A_k)=0$. 
It is well-known that the uniform fine quantization error of a continuous random variable converges to the uniform distribution (see, \eg, \cite[Theorem 4.1]{JWW07}). Therefore $\{2^k X_i\}
          \xrightarrow{\mathcal{L}} \mathrm{Unif}[0,1]^d$ for each $i \in [m]$. Furthermore, since $X_i$ are independent, $Z_k=\lfloor \sum_{i=1}^m a_i
          \{2^k X_i\} \rfloor \xrightarrow{\mathcal{L}} \lfloor \sum_{i=1}^m a_i U_i \rfloor $ where $U_1,\dotsc,U_m$ are i.i.d. $\mathrm{Unif} [0,1]^d$ random variables.

Let $ \mathcal{Z}' \triangleq \{z \in \mathcal{Z}: \prob { \lfloor \sum_{i=1}^m a_i U_i \rfloor  =z } >0 \}$. Since $Z_k \xrightarrow{\mathcal{L}} \lfloor \sum_{i=1}^m a_i U_i \rfloor $, $\lk \prob{Z_k=z} >0$ for any $z \in \mathcal{Z}'$ and $\lk \prob {Z_k=z}=0$ for any $z \in \mathcal{Z} \backslash \mathcal{Z}'$.
Since
\begin{align*}
T(Z_k;A_k) & =  \sum_{z \in \mathcal{Z}} \prob{Z_k=z} \TV(P_{A_k},P_{A_k|Z_k=z})\\
& \leq  \sum_{z \in \mathcal{Z}'}\TV(P_{A_k},P_{A_k|Z_k=z})+ \sum_{z \in \mathcal{Z}\backslash \mathcal{Z}'} \prob {Z_k=z},
\end{align*}
it suffices to prove that $\TV(P_{A_k},P_{A_k|Z_k=z}) \to 0$ for any $z \in \mathcal{Z}'$.

Using the triangle inequality and the fact that $P_{A_k}=  \sum_{z' \in \mathcal{Z}} \prob{Z_k=z'} P_{A_k|Z_k=z'}$, we have
\begin{align*}
\TV(P_{A_k},P_{A_k|Z_k=z}) &\leq  \sum_{ z' \in \mathcal{Z}} \prob{Z_k=z'} \TV(P_{A_k|Z_k=z},P_{A_k|Z_k=z'})\\
& \leq  \sum_{z' \in \mathcal{Z}'} \TV(P_{A_k|Z_k=z},P_{A_k|Z_k=z'}) + \sum_{z \in \mathcal{Z}\backslash \mathcal{Z}'} \prob{Z_k=z}. 
\end{align*}
Thus it suffices to show that $\TV(P_{A_k|Z_k=z},P_{A_k|Z_k=z'}) \to 0$ for any $z,z' \in \mathcal{Z}'$. Since $A_k=B_k+Z_k$, we have
\begin{align}
\nonumber
\TV(P_{A_k|Z_k=z},P_{A_k|Z_k=z'}) & = \TV(P_{B_k+Z_k|Z_k=z},P_{B_k+Z_k|Z_k=z'}) \\
\nonumber
&= \TV(P_{B_k+z|Z_k=z},P_{B_k+z'|Z_k=z'}) \\
\nonumber
& \leq \TV(P_{B_k+z|Z_k=z},P_{B_k+z|Z_k=z'})+ \TV(P_{B_k+z|Z_k=z'},P_{B_k+z'|Z_k=z'})\\
& = \TV(P_{B_k|Z_k=z},P_{B_k|Z_k=z'})+ \TV(P_{B_k+z|Z_k=z'},P_{B_k+z'|Z_k=z'}). \label{eq:allweneed}
\end{align}
Thus it suffices to prove that each term on the right-hand side of \eqref{eq:allweneed} vanishes. For the first term, note that
\begin{align*}
\TV(P_{B_k|Z_k=z},P_{B_k|Z_k=z'}) \leq \TV(P_{B_k|Z_k=z},P_{B_k}) + \TV(P_{B_k|Z_k=z'},P_{B_k}),
\end{align*}
where $\TV(P_{B_k|Z_k=z},P_{B_k})\to 0$ for any  $z \in \mathcal{Z}'$ because, from the Pinsker's inequality,
\begin{align*}
I(Z_k;B_k) & = \sum_{z \in \mathcal{Z}} \prob{Z_k=z} D(P_{B_k}\|P_{B_k|Z_k=z})\\
& \geq 2 \sum_{z \in \mathcal{Z}} \prob{Z_k=z} \TV^2(P_{B_k},P_{B_k|Z_k=z})\\
& \geq 2 \prob{Z_k=z} \TV^2(P_{B_k},P_{B_k|Z_k=z}), 
\end{align*}
and $I(Z_k;B_k)\to 0$ by \prettyref{lmm:mutual2} and $\liminf_{k \rightarrow \infty}\prob{Z_k=z} >0$ for any $z \in \mathcal{Z}'$.

Thus it remains to prove the second term on the right-hand of \prettyref{eq:allweneed} vanishes for any $z,z' \in \mathcal{Z}'$.
Since $a_1,\ldots,a_m$ are relatively prime, for any $p \in \mathbb{Z}$, there exists $q_1,\ldots, q_m \in \mathbb{Z}$ such that $p= \sum_{i=1}^m a_i q_i$. Hence, for any $z,z' \in \Z^d$, there exists $b_1,\ldots,b_m \in \Z^d$ such that 
$$
z'-z =\sum_{i=1}^m a_i b_i.
$$
Then, 
\begin{align*}
B_k+\left(z'-z \right) &=  \sum_{i=1}^m a_i \lfloor 2^kX_i \rfloor + \sum_{i=1}^m a_i b_i= \sum_{i=1}^m a_i \Big \lfloor  2^k (X_i+\frac{b_i}{2^k})  \Big \rfloor.
\end{align*}
By definition, $Z_k=  \lfloor \sum_{i=1}^m a_i \{2^kX_i\}  \rfloor = \lfloor \sum_{i=1}^m a_i \{2^k (X_i+\frac{b_i}{2^k})\}  \rfloor $.
Consider the second term on the right-hand of \eqref{eq:allweneed}. We have
\begin{align*}
\TV ( P_{B_k+ z| Z_k=z'},  P_{B_k+z'| Z_k=z'}) &=
\TV ( P_{B_k+ \left(z'-z \right)| Z_k=z'},P_{B_k| Z_k=z'} )\\
&= \TV \big(  P_{ \sum_{i=1}^m a_i \lfloor  2^k (X_i+\frac{b_i}{2^k})   \rfloor|Z_k=z'   },P_{ \sum_{i=1}^m a_i \lfloor  2^k X_i   \rfloor |Z_k=z' } \big) &\\
& \stackrel{(a)}\leq  \TV ( P_{X_1+\frac{b_1}{2^k},\ldots,X_m+\frac{b_m}{2^k}|Z_k=z'}, P_{X_1,\ldots,X_m|Z_k=z'}) &\\
& \stackrel{(b)} \leq \frac{ 1}{\prob{Z_k=z'}}\TV ( P_{X_1+\frac{b_1}{2^k},\ldots,X_m+\frac{b_m}{2^k}}, P_{X_1,\ldots,X_m}) &\\
& \stackrel{(c)}\leq \frac{ 1}{\prob{Z_k=z'}}\sum_{i=1}^m \TV ( P_{X_i+\frac{b_i}{2^k}},P_{X_i}),
\end{align*}
where $(a)$ follows from the data processing inequality for total variation and $(b)$ follows from \prettyref{lmm:conditionalTV}, and $(c)$ follows from the independence of $X_1,\ldots,X_m$.
Letting $k \rightarrow \infty$ in view of \prettyref{lmm:TVshift} finishes the proof.
\end{proof}
 
\subsection{Proof of \prettyref{lmm:implmm2}}


\begin{proof}
Let $X_1,\ldots,X_m$ be independent and $\reals^d$-valued continuous random variables.
With out loss of generality, we may assume $a_i \neq 0$. For each $i \in \left[m\right]$, $\prob{X_i \in B_N^{(d)}} \xrightarrow{N \rightarrow \infty}1$.
Recall the conditional pdf notation \prettyref{eq:ftruncate}.
For $x \in \mathbb{R}^d$, we have
\begin{align}
f_{a_iX_i^{(N)}}(x)= \frac{1}{|a_i|} f_{X_i^{(N)}}\left(\frac{x}{a_i}\right)& = \dfrac{\frac{1}{|a_i|}f_{X_i}\left(\frac{x}{a_i}\right) \mathbbm{1}\left\{ \frac{x}{|a_i|} \in B_N^{(d)} \right\} }{\prob {X_i \in B_N^{(d)}}} = \dfrac{f_{a_iX_i}(x) \mathbbm{1}\left\{ \frac{x}{|a_i|} \in B_N^{(d)} \right\}  }{\prob {X_i \in B_N^{(d)}}}. \label{eq:truncpdf}
\end{align}
By the independence of the $X_i$'s, the pdf of $ \sum_{i=1}^m a_i X_i$ is given by:
\begin{align*}
g(z) & \triangleq f_{a_1X_1+ \ldots+a_m X_m}(z)\\
&= \int_{\mathbb{R}^{d}\times \cdots \times \mathbb{R}^d} f_{a_1 X_1} \left(x_1 \right)\ldots f_{a_m X_m}\left( z-x_1-\ldots-x_{m-1}\right)dx_1 \cdots dx_{m-1}.
\end{align*}
Similarly, in view of \eqref{eq:truncpdf}, the pdf of $ \sum_{i=1}^m a_i X_i^{(N)}$ is given by: 
\begin{align*}
g_N(z) &\triangleq f_{a_1X_1^{(N)}+\ldots+a_mX_m^{(N)}}(z) \\
&= \int f_{a_1 X_1^{(N)}} \left(x_1 \right)\ldots f_{a_m X_m^{(N)}}\left( z-x_1-\ldots-x_{m-1}\right)dx_1\ldots dx_{m-1}\\
&= \frac{1}{ \prod_{i=1}^m \prob {X_i \in B_N^{(d)} }} \cdot \int f_{a_1 X_1} \left(x_1 \right) \ldots f_{a_m X_m}\left(z-x_1-\ldots-x_{m-1}\right) \\
& \quad \hspace{2.6cm}\cdot\mathbbm{1}\left\{\frac{x}{|a_i|} \in B_N^{(d)} ,\ldots,\frac{z-x_1-\ldots-x_{m-1}}{|a_m|} \in B_N \right\} dx_1 \ldots dx_{m-1}. 
\end{align*}
Now taking the limit on both sides, we have $\lim_{N \rightarrow \infty} g_N(z) =g(z)$ a.e., which follows the dominated convergence theorem and the fact that $g(z)$ is finite a.e.

Next we prove that the differential entropy also converges. Let $N_0 \in \mathbb{N}$ be so large that 
\[
\prod_{i=1}^m \prob{X_i \in B_N^{(d)} } \geq \frac{1}{2}
\]
 for all $N \geq N_0$. Now, 
\begin{align*}
\left| h\pth{\sum_{j=1}^m a_j X_j}-h\pth{\sum_{j=1}^m a_j X_j^{(N)}} \right|
&= \left| \int_{\mathbb{R}^d} g \log \frac{1}{g} - \int_{\mathbb{R}^d} g_N \log \frac{1}{g_N} \right|\\
& \leq \int g_N \log \frac{g_N}{g}  +  \int \left| \left(g-g_N\right)\log \frac{1}{g} \right|\\
&= D \left(P_{\sum_{i=1}^m a_i X_i^{(N)}} \| P_{\sum_{i=1}^m a_i X_i}  \right) + \int \left| \left(g-g_N\right)\log g \right|\\
& \stackrel{(a)}\leq \sum_{i=1}^m D \left( P_{X_i^{(N)}}\|P_{X_i} \right) + \int \left| \left(g-g_N\right)\log g \right| \\
& \stackrel{(b)}=  \log \frac{1}{\prod_{i=1}^m \prob {X_i \in B_N^{(d)}}}+\int \left| \left(g-g_N\right)\log g \right| \\
& \stackrel{(c)} \to 0 \mbox{ as }{N \rightarrow \infty},
\end{align*}
where $(a)$ follows from the data processing inequality and $(b)$ is due to $D\left(P_{X|X\in E}\|P_{X}  \right)=\log \frac{1}{\prob {X \in E}}$, and $(c)$ follows from the dominated convergence theorem since $\left|\left(g-g_N\right)\log g \right| \leq 3g \left|\log g \right|$ for all $N \geq N_0$ and $\int g \left| \log g\right| < \infty$ by assumption. This completes the proof.
%
\end{proof}

\subsection{Proof of \prettyref{lmm:NoisyMutual}}
\label{sec:NoisyMutual}
\begin{proof}
In view of the concavity and shift-invariance of the differential entropy, without loss of generality, we may assume that $h(Z) < \infty$.
Since $U$ and $Z$ are independent, we have
\begin{align*}
I\pth{U;U+\varepsilon Z} &= h\pth{U+\varepsilon Z}- h \pth{U+\varepsilon Z |U}= h\pth{U+\varepsilon Z}-h(Z)-\log \varepsilon.
\end{align*}
 Hence it suffices to show that $\lim_{\varepsilon \rightarrow 0} I(U;U+\varepsilon Z) =H(U)$. Notice that $I(U;U+\varepsilon Z) \leq H(U)$ for all $\varepsilon$. On the other hand, $(U,U+\varepsilon Z) \xrightarrow{\mathcal{L}} (U,U)$ and $U+\varepsilon Z \xrightarrow{\mathcal{L}} U$ in distribution, by the continuity of the characteristic function. 
 By the weak lower semicontinuity of the divergence, we have
\begin{align*}
\liminf_{\varepsilon \rightarrow 0} I(U;U+\varepsilon Z) &=\liminf_{\varepsilon \rightarrow 0} D \pth{P_{U,U+\varepsilon Z} \| P_U P_{U+\varepsilon Z}}\\
& \geq D  \pth{P_{U,U} \| P_U P_U} = H(U),
\end{align*}
completing the proof.
\end{proof}

\subsection{Proof of \prettyref{lmm:Embedding}}
\label{sec:Embedding}
\begin{proof}
For any $\reals^d$-valued discrete random variable $U$, let $U_{[k]} \triangleq \pth{U_{(1)},\ldots,U_{(k)}}$, where $U_{(i)}$ are \iid copies of $U$. Thus $H \pth{U_{[k]}}=k H(U)$ and $\sum_{j=1}^m b_j (U_j)_{[k]}=\pth{\sum_{j=1}^m b_j U_j}_{[k]}$ for any $b_1,\ldots,b_m \in \reals$ and any discrete random variables $U_1,\ldots,U_m \in \reals^d$.

Let $U_1,\ldots,U_m$ be $\reals^d$-valued discrete random variables and $A=(a_{ij}) \in \reals^{n \times m}$. Let $\calU \subset \reals^d$ be a countable set such that $\sum_{i=1}^m a_{ij} U_j \in \calU $ for each $i \in [n]$. 
 Let $f_M:\reals^{d \times k} \to \reals^d$ be given by $f_M(x_1,\ldots,x_k)=\sum_{i=1}^m x_i M^i $ for $M>0$. Since for any $x=(x_1,\ldots,x_k)$  and $y=(y_1,\ldots,y_k)$ in $\calU^k$, there are at most $k$ values of $M$ such that $f_M(x)=f_M(y)$. Since $\calU^k$ is countable, $f_{M}$ is injective on $\calU^k$
 for all but at most countably many values of $M$. Fix an $M_0 >0$ such that $f_{M_0}$ is injective on $\calU^k$ and abbreviate $f_{M_0}$ by $f$. Let $U_j^{(k)}=f(\pth{U_j}_{[k]})$ for each $j \in [m]$. Thus, for each $i \in [n]$,
\begin{align*}
H \pth{\sum_{j=1}^m b_j U_j^{(k)}} &= H \pth{\sum_{j=1}^m a_{ij} f\pth{\pth{U_j}_{[k]}} } \stackrel{(a)}= H \pth{f \pth{\sum_{j=1}^m a_{ij} (U_j)_{[k]}}  }\\
&= H \pth{f \pth{\pth{\sum_{j=1}^m a_{ij} U_j}_{[k]} }  } \stackrel{(b)}= H \pth{ \pth{\sum_{j=1}^m a_{ij} U_j}_{[k]} }\\
&= k H \pth{ \sum_{j=1}^m a_{ij} U_j},
\end{align*}
where $(a)$ follows from the linearity of $f$ and $(b)$ follows form the injectivity of $f$ on $\calU^k$ and the invariance of Shannon entropy under injective maps.
\end{proof}

\section{Extensions to general groups}
\label{sec:groupstuff}
We now consider a more general version of \prettyref{thm:main}. To extend the notion of differential entropy to a more general setting, we need the following preliminaries. Let $G$ be a locally compact abelian group equipped with a Haar measure $\mu$. Let $X$ be a $G$-valued random variable whose distribution is absolutely continuous with respect to $\mu$. Following \cite{MK15}, we define the differential entropy of $X$ as:
$$
h\left(X\right) = \int f \log \frac{1}{f} d\mu = \Expect\qth{\log \frac{1}{f(X)}},
$$
where $f$ denotes the pdf of $X$ with respect to $\mu$. This extends both the Shannon entropy on $\integers^d$ (with $\mu$ being the counting measure) and the differential entropy on $\reals^d$ (with $\mu$ being the Lebesgue measure).

We now state a generalization of \prettyref{thm:main}, which holds for connected abelian Lie groups. Note that inequalities proved in \cite{MK15} using data processing inequalities hold for more general groups, such as locally compact groups on which Haar measures exist.

\begin{theorem}
\label{thm:linecircle}
Under the assumptions of \prettyref{thm:main}, suppose \eqref{eq:A} holds for any independent random variables $Z_1,\ldots,Z_m$ taking values in $\integers^d \times (\integers/2^k \integers)^n$ for any $k,d,n\in\naturals$. Then \eqref{eq:B} holds for any connected abelian Lie group $G'$ and independent $G'$-valued random variables $X_1,\ldots,X_m$.
\end{theorem}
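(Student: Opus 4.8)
The plan is to reduce the case of a general connected abelian Lie group $G'$ to the Euclidean case already handled in \prettyref{thm:main}, exploiting the classification of connected abelian Lie groups: every such group is isomorphic to $\reals^p \times \torus^q$ for some $p, q \geq 0$, where $\torus = \reals/\integers$ is the circle. Since the differential entropy (with respect to the Haar measure) is additive over direct products and the group operation acts coordinatewise, it suffices to understand how to approximate a $\torus$-valued random variable, and the $\reals$-valued coordinates are already covered by \prettyref{thm:main}. So the heart of the matter is the circle $\torus$, or more precisely $\torus^q$.

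The key idea for the circle is to ``unroll and discretize'' simultaneously: a continuous $\torus$-valued random variable $X$ can be lifted to $[0,1)$ and then quantized at scale $2^{-k}$, producing a $\integers/2^k\integers$-valued random variable $\lfloor 2^k X\rfloor \bmod 2^k$; crucially, addition on $\torus$ descends to addition modulo $2^k$, which is exactly why the hypothesis of \prettyref{thm:linecircle} is stated for random variables valued in $\integers^d \times (\integers/2^k\integers)^n$ rather than just $\integers^{d+n}$. First I would record a R\'enyi-type asymptotic expansion for $\torus$-valued random variables: if $X$ is continuous on $\torus$ with finite differential entropy, then $H(\lfloor 2^k X\rfloor) = k\log 2 + h(X) + o(1)$ as $k\diverge$; this follows from \prettyref{lmm:lmmRenyi} applied to a lift of $X$ to $[0,1]$, noting that the wrap-around at the endpoint contributes negligibly since $\prob{X \in [0, 2^{-k})\cup[1-2^{-k},1)} \to 0$. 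Second I would establish the analogue of \prettyref{lmm:implmm1} in this setting: for the relatively-prime coefficients $a_{i1},\ldots,a_{im}$, the Shannon entropy of $\sum_j a_{ij} \lfloor 2^k X_j\rfloor \bmod 2^k$ differs from $H(\lfloor 2^k \sum_j a_{ij} X_j\rfloor)$ on $\torus$ by $o(1)$ — the discrepancy is again controlled by the (bounded) carry variable $Z_k = \lfloor \sum_j a_{ij}\{2^k X_j\}\rfloor$, and the same asymptotic-independence machinery (Lemmas~\ref{lmm:intfracindep}--\ref{lmm:mutual1}) applies essentially verbatim, with the only new ingredient being that reduction modulo $2^k$ is a deterministic function so cannot increase mutual information. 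Then, exactly as in the proof of \prettyref{thm:main}, multiplying the discrete inequality \eqref{eq:A} (applied to the $\integers/2^k\integers$-valued quantizations, with the $\reals^d$ coordinates handled by ordinary quantization as before) by $\alpha_i$, summing, using $\sum_i \alpha_i = 0$ to kill the $k\log 2$ terms, and sending $k\diverge$ yields \eqref{eq:B} for $X_j$ valued in $\torus^q$, hence for $\reals^p\times\torus^q$ by combining with the Euclidean argument coordinatewise, and finally for arbitrary connected abelian Lie groups by the structure theorem.

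There is one technical point I would need to address: \prettyref{thm:main} and the lemmas above reduce the $\reals^d$ case to $[0,1]^d$-valued random variables via truncation (\prettyref{lmm:implmm2}) and affine rescaling, using that $h(aZ) = h(Z) + d\log a$; on the torus there is no rescaling, but there is also no need for truncation since $\torus$ is already compact, so the reduction is actually simpler — one works directly with the quantization $[X]_k$ on $\torus$. I would also need the finiteness hypothesis $H(\lfloor X_j \rfloor) < \infty$, but on the compact group $\torus$ the ``integer part'' is trivial, so this condition is automatic and the only standing assumption is finiteness of $h(X_j)$, consistent with the blanket assumption in the paper. The main obstacle, and the step I would be most careful about, is verifying that the proof of \prettyref{lmm:mutual1} — in particular the argument that $z' - z = \sum_i a_i b_i$ for integer vectors $b_i$ together with the shift-invariance \prettyref{lmm:TVshift} — goes through when arithmetic is modulo $2^k$: one must check that shifting $X_i$ by $b_i/2^k$ on $\torus$ still has vanishing total-variation effect (true, since \prettyref{lmm:TVshift} is a purely local statement about densities and the shift sizes $|b_i|/2^k \to 0$), and that the relative primality of the $a_i$ still guarantees solvability of $z' - z = \sum_i a_i b_i$ over $\integers^d$ — which it does, independently of the modulus. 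Granting these adaptations, which are routine though slightly tedious, the theorem follows.
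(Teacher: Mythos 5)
Your proposal is correct and follows essentially the same route as the paper: reduce via the structure theorem to $\reals^p\times\torus^q$, quantize the torus coordinates by angle into $\integers/2^k\integers$, prove the R\'enyi expansion through the lift to $[0,1)$ (via the entropy invariance under the isomorphism), and control the quantization--summation discrepancy by the bounded carry variable $Z_k$ together with the observation that reduction mod $2^k$ is a deterministic map, so that Lemmas~\ref{lmm:mutual1} and \ref{lmm:mutual2} apply to the unreduced integer-valued sums and no modular re-derivation is needed. The only phrasing to tighten is the claim that differential entropy is ``additive over direct products'' --- it is not for dependent coordinates, and the correct combination (which you in fact carry out) is a single joint quantization into $\integers^p\times(\integers/2^k\integers)^q$ followed by one application of \eqref{eq:A}.
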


We start by proving a special case of \prettyref{thm:linecircle} with $G$ being a finite cyclic group and $G'$ is the torus $\mathbb{T}^d$, where 
$\mathbb{T}$ denotes the unit circle in $\mathbb{C}$. 
\prettyref{thm:linecircle} then follows easily since any connected abelian Lie group is isomorphic to product of torus and Euclidean space.
We need the following preliminary fact relating the Haar measures and differential entropies of random variables taking values on isomorphic groups.

\begin{lemma}
\label{lmm:entropyinvar}
Let $\phi: G' \rightarrow G$ be a group isomorphism between abelian topological groups $(G,+)$ and $(G',+)$ and $\mu'$ be a Haar measure on $G'$. 
Then the pushforward measure\footnote{That is, $(\phi_{*}\mu') (B) = \mu' ( \phi^{-1} (B) )$ for any measurable subset $B$ of $G$.} $\mu=\phi_*\mu'$ is a Haar measure on $G$. Furthermore, for any $G$-valued continuous random variable $X$,
$$
h(X) =h \left(\phi^{-1}(X) \right).
$$
\end{lemma}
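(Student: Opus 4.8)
The plan is to treat the two assertions separately, using only the defining properties of a Haar measure (a nonzero, translation-invariant Radon measure) and the change-of-variables formula for pushforwards. I will use that a topological group isomorphism $\phi$ is in particular a homeomorphism, hence a Borel isomorphism carrying compact sets to compact sets and open sets to open sets, so that $\mu=\phi_*\mu'$ is a well-defined Borel measure that is locally finite and inner/outer regular whenever $\mu'$ is.

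First, to see that $\mu$ is translation-invariant, fix $g\in G$ and write $g=\phi(g')$ with $g'\in G'$ the unique preimage. For a Borel set $B\subseteq G$, the homomorphism property together with bijectivity of $\phi$ gives $\phi^{-1}(g+B)=g'+\phi^{-1}(B)$, since $\phi(x)\in g+B \iff \phi(x-g')\in B \iff x-g'\in\phi^{-1}(B)$. Hence
\[
\mu(g+B)=\mu'\bigl(\phi^{-1}(g+B)\bigr)=\mu'\bigl(g'+\phi^{-1}(B)\bigr)=\mu'\bigl(\phi^{-1}(B)\bigr)=\mu(B),
\]
using translation-invariance of $\mu'$. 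Since $\mu(G)=\mu'(G')\neq 0$ and $\mu$ inherits the Radon properties as noted, $\mu$ is a Haar measure on $G$; equivalently, by uniqueness of Haar measure it is a positive scalar multiple of any fixed one, which is all that is needed below.

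Second, for the entropy identity, let $X$ be $G$-valued with pdf $f$ with respect to $\mu$ and set $Y=\phi^{-1}(X)$. I claim $Y$ has pdf $f\circ\phi$ with respect to $\mu'$: for Borel $B'\subseteq G'$, injectivity of $\phi$ gives $\phi^{-1}(\phi(B'))=B'$, and the pushforward change-of-variables formula yields
\[
\Prob[Y\in B']=\Prob[X\in\phi(B')]=\int_{\phi(B')}f\,d\mu=\int_{\phi(B')}f\,d(\phi_*\mu')=\int_{B'}(f\circ\phi)\,d\mu'.
\]
Writing $g\triangleq f\circ\phi$ for this density and noting $g(Y)=f(\phi(Y))=f(X)$, the definition of differential entropy on a group gives directly $h(Y)=\Expect[\log\frac1{g(Y)}]=\Expect[\log\frac1{f(X)}]=h(X)$; alternatively, applying the same change of variables to the positive and negative parts of $f\log\frac1f$ gives $h(Y)=\int_{G'}(f\circ\phi)\log\frac1{f\circ\phi}\,d\mu'=\int_G f\log\frac1f\,d(\phi_*\mu')=\int_G f\log\frac1f\,d\mu=h(X)$.

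The one real subtlety — the step I would be most careful about — is the topological bookkeeping needed to conclude that $\mu=\phi_*\mu'$ is genuinely a Haar measure rather than merely a translation-invariant Borel measure: one uses that $\phi$ is a homeomorphism so that local finiteness and regularity transfer, and one should note that the pushforward identity $\int\psi\,d(\phi_*\mu')=\int(\psi\circ\phi)\,d\mu'$ applies to the possibly signed integrand $\psi=f\log\frac1f$ — here the standing assumption that all differential entropies exist and are finite, together with a positive/negative part decomposition, removes any integrability concern. Everything else is a direct computation.
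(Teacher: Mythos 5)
Your proposal is correct and follows essentially the same route as the paper: push forward the Haar measure, verify translation invariance via $\phi^{-1}(g+B)=\phi^{-1}(g)+\phi^{-1}(B)$ together with the homeomorphism property for the Radon conditions, and then observe that $f\circ\phi$ is the density of $\phi^{-1}(X)$ with respect to $\mu'$ so that the entropy integral is preserved under the change of variables. Your extra care in justifying the density claim and the integrability of the signed integrand only makes explicit what the paper leaves implicit.
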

\begin{proof}
The first part is a standard exercise: For any measurable subset $A$ of $G$ and any $g \in G$, then
$$
\mu(g+A )= \mu' (\phi^{-1} ( g+A )) =\mu' ( \phi^{-1}( g ) +  \phi^{-1}( A) )=\mu' (\phi^{-1}(A))=\mu (A ),
$$
which follows the translation invariance of $\mu'$. Similarly, using the fact that $\phi^{-1}$ is a homeomorphism one can verify that $\mu$ is finite on all compacts as well as its inner and outer regularity.


If $f$ is the density function of $X$ with respect to the Haar measure $\phi_*\mu'$ on $G$, then $f \circ \phi$ is the pdf of $\phi^{-1}\left(X\right)$ with respect to the Haar measure $\mu'$ on $G'$. Hence,
\begin{align*}
h \left(X\right)
&=\int f \log \frac{1}{f} d (\phi_*\mu')\\
&=  \int f\circ \phi \log \frac{1}{f\circ \phi}d\mu\\
&= h \left(\phi^{-1}\left(X\right)\right). \qedhere
\end{align*}
\end{proof}

As an example, consider the group $(\reals^+, \times)$ of strictly positive real numbers with real multiplication, which is isomorphic to $(\reals,+)$ via $x\mapsto\log x$. Then for any $X\in(\reals^+, \times)$, its differential entropy is given by $h(X)=h(\log X)$, with the latter defined in the usual manner.

Define $\phi:[0,1)^n \rightarrow \mathbb{T}^n$ by $\phi(\theta_1,\ldots,\theta_n)=(e^{2 \pi i\theta_1},\ldots,e^{2\pi i\theta_n} )$. Let the Haar measure on $\mathbb{T}^n$ be the pushforward of Lebesgue measure under $\phi$. For $X \in \mathbb{T}^n$, let $\Theta= \phi^{-1}(X)$. Define the quantization operation of $X$ in terms of the angles
\begin{equation}
\left[X\right]_k \triangleq \phi \left( \frac{\lfloor 2^k \Theta \rfloor}{2^k}\right), \quad [\Theta]_k = \frac{\lfloor 2^k \Theta \rfloor}{2^k}.	
	\label{eq:quant-angle}
\end{equation}
Since $\phi$ is a bijection, $H \left([X]_k \right)=H \left( \lfloor 2^k \Theta \rfloor \right) $. We now prove \prettyref{thm:circles}.

\begin{theorem}
\label{thm:circles}
Under the assumptions of \prettyref{thm:main}, suppose \eqref{eq:A} holds for any cyclic group $G$-valued independent random variables $Z_1,\ldots,Z_m$. Then \eqref{eq:B} holds for any $\mathbb{T}^n$-valued independent random variables $X_1,\ldots,X_m$.
\end{theorem}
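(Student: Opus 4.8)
The plan is to run the proof of \prettyref{thm:main} essentially verbatim, with Lebesgue measure replaced by the Haar measure on $\torus^n$ and all ``integer'' arithmetic replaced by arithmetic in the finite abelian group $(\integers/2^k\integers)^n$; note that the truncation step \prettyref{lmm:implmm2} is not needed here since $\torus^n$ is already compact. First I would pass to angle coordinates: set $\Theta_j\triangleq\phi^{-1}(X_j)\in[0,1)^n$, which are independent and continuous, and let $S_i\triangleq\sum_j a_{ij}X_j\in\torus^n$, so that $\phi^{-1}(S_i)=\Psi_i\triangleq\{\sum_j a_{ij}\Theta_j\}$ (componentwise fractional part, i.e.\ mod-$1$ reduction). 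By \prettyref{lmm:entropyinvar}, $h(S_i)=h(\Psi_i)$ and $h(X_j)=h(\Theta_j)$. Since $\Theta_j,\Psi_i\in[0,1)^n$ we have $\lfloor\Theta_j\rfloor=\lfloor\Psi_i\rfloor=0$, so $H(\lfloor\Theta_j\rfloor)$ and $H(\lfloor\Psi_i\rfloor)$ are finite and \prettyref{lmm:lmmRenyi} applies, giving $H([\Psi_i]_k)=nk\log 2+h(\Psi_i)+o_k(1)$ and $H([\Theta_j]_k)=nk\log2+h(\Theta_j)+o_k(1)$ (so in particular all these discrete entropies are finite for each $k$).

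Next I would introduce the discrete surrogates. A floor computation shows that, modulo $2^k$, $A_k\triangleq 2^k[\Psi_i]_k=\lfloor 2^k\{\sum_j a_{ij}\Theta_j\}\rfloor$ equals $\sum_j a_{ij}\lfloor 2^k\Theta_j\rfloor+\lfloor\sum_j a_{ij}\{2^k\Theta_j\}\rfloor$. Put $V_j^{(k)}\triangleq\lfloor 2^k\Theta_j\rfloor\in\{0,\dots,2^k-1\}^n$, regarded as an element of $(\integers/2^k\integers)^n$; the $V_j^{(k)}$ are independent, with $H(V_j^{(k)})=H([X_j]_k)$ finite. Let $B_k\triangleq\bigoplus_j a_{ij}V_j^{(k)}$ denote the sum taken in $(\integers/2^k\integers)^n$ and $Z_k\triangleq\lfloor\sum_j a_{ij}\{2^k\Theta_j\}\rfloor\bmod 2^k$ the ``carry.'' Exactly as in \prettyref{lmm:implmm1}, $Z_k$ ranges over a fixed finite set $\mathcal{Z}$ whose size $(b-a)^n$ depends only on the $a_{ij}$; hence for all large $k$ the reduction $\bmod\ 2^k$ is injective on $\mathcal{Z}$ (so it changes neither $H(Z_k)$ nor any mutual information involving $Z_k$) and $A_k=B_k\oplus Z_k$ in $(\integers/2^k\integers)^n$.

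The heart of the matter is the analog of \prettyref{lmm:implmm1}: $H(A_k)-H(B_k)\to 0$. Since adding a fixed group element is a bijection of $(\integers/2^k\integers)^n$, $H(A_k\mid Z_k)=H(B_k\mid Z_k)$, so $H(A_k)-H(B_k)=I(Z_k;A_k)-I(Z_k;B_k)$, and it suffices to show both mutual informations vanish. For $I(Z_k;B_k)\to0$ I would reuse \prettyref{lmm:mutual2}: $Z_k$ is a function of $(\{2^k\Theta_j\})_j$ and $B_k$ of $(\lfloor 2^k\Theta_j\rfloor)_j$, so data processing and independence give $I(Z_k;B_k)\le\sum_j I(\lfloor 2^k\Theta_j\rfloor;\{2^k\Theta_j\})\to0$ by \prettyref{lmm:intfracindep} applied to each $\Theta_j$. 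For $I(Z_k;A_k)\to0$ I would reuse \prettyref{lmm:mutual1}: since $|\mathcal{Z}|$ is bounded, by \eqref{eq:yuwuineq} it is enough that the $T$-information $T(Z_k;A_k)\to0$, and because $\{2^k\Theta_j\}\xrightarrow{\mathcal{L}}\mathrm{Unif}[0,1]^n$ (fine quantization error converges to uniform) $Z_k$ converges in law to a fixed distribution supported on some $\mathcal{Z}'\subseteq\mathcal{Z}$ with $\liminf_k\pprob{Z_k=z}>0$ for $z\in\mathcal{Z}'$; this reduces the task to $\TV(P_{A_k\mid Z_k=z},P_{A_k\mid Z_k=z'})\to0$ for $z,z'\in\mathcal{Z}'$. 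Writing $z'-z=\sum_j a_{ij}b_j$ with $b_j\in\integers^n$ --- possible since $a_{i1},\dots,a_{im}$ are relatively prime --- a floor computation shows the torus shift $\Theta_j\mapsto\{\Theta_j+b_j/2^k\}$ leaves $Z_k$ unchanged while it sends $V_j^{(k)}\mapsto V_j^{(k)}\oplus b_j$ and hence $B_k\mapsto B_k\oplus(z'-z)$ in $(\integers/2^k\integers)^n$; therefore $\TV(P_{B_k+z\mid Z_k=z'},P_{B_k+z'\mid Z_k=z'})$ collapses, via data processing for $\TV$, \prettyref{lmm:conditionalTV}, and independence, to $\tfrac{1}{\pprob{Z_k=z'}}\sum_j\TV(P_{\{\Theta_j+b_j/2^k\}},P_{\Theta_j})$, which tends to $0$ by the torus version of \prettyref{lmm:TVshift} (the proof is unchanged: continuous compactly supported functions are dense in $L^1$ of the compact group $\torus^n$ and are uniformly continuous). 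The remaining triangle-inequality reductions are identical to \prettyref{lmm:mutual1}.

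Finally, chaining \prettyref{lmm:lmmRenyi} with $H([\Psi_i]_k)=H(A_k)=H(B_k)+o_k(1)$ gives, for each $i$, $h(S_i)=H\pth{\bigoplus_j a_{ij}V_j^{(k)}}-nk\log 2+o_k(1)$. Multiplying by $\alpha_i$, summing over $i$, and using $\sum_i\alpha_i=0$ to cancel the divergent $nk\log2$ terms yields $\sum_i\alpha_i h\pth{\sum_j a_{ij}X_j}=\sum_i\alpha_i H\pth{\bigoplus_j a_{ij}V_j^{(k)}}+o_k(1)$, and the sum on the right is $\le o_k(1)$ by the hypothesis applied to the independent $(\integers/2^k\integers)^n$-valued random variables $V_1^{(k)},\dots,V_m^{(k)}$ (this is the finite-cyclic instance of \eqref{eq:A}; for $\torus^n$ with $n\ge 2$ one uses the product $(\integers/2^k\integers)^n$, which is exactly the hypothesis of \prettyref{thm:linecircle}). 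Letting $k\to\infty$ gives \eqref{eq:B}. I expect the only genuine obstacle to be the analog of \prettyref{lmm:implmm1}: one must verify that the wraparound modulo $2^k$ neither enlarges the carry set nor breaks the relative-primality shift argument --- both are handled by the observations that the carry lives in a $k$-independent bounded set (so $\bmod\ 2^k$ is eventually injective on it) and that arithmetic in $(\integers/2^k\integers)^n$ is translation-invariant, so a fixed group shift is an entropy-preserving bijection.
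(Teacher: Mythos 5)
Your proposal is correct and follows essentially the same route as the paper: pass to angle coordinates via $\phi^{-1}$, combine \prettyref{lmm:entropyinvar} with R\'enyi's \prettyref{lmm:lmmRenyi} for the quantization asymptotics, and control $H(A_k)-H(B_k)=I(Z_k;A_k)-I(Z_k;B_k)$ for the mod-$2^k$ sums. The only real difference is that where you re-run the entire argument of \prettyref{lmm:mutual1} inside $(\integers/2^k\integers)^n$ (carry set, torus shift, TV-shift lemma on $\torus^n$), the paper sidesteps this by introducing the un-reduced integer-valued sums $A_k'$ and $B_k'$ and using a single data-processing step $I(Z_k;A_k)\le I(Z_k;A_k')$ and $I(Z_k;B_k)\le I(Z_k;B_k')$ to invoke the Euclidean \prettyref{lmm:mutual1} and \prettyref{lmm:mutual2} exactly as already proved.
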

\begin{proof}
Let $X_1,\ldots,X_m$ be $\mathbb{T}^n$-valued continuous independent random variables. For each $i\in [m]$, let $\Theta_i=\phi^{-1}(X_i)$. Since $\lfloor 2^k \Theta_i \rfloor$ is $\mathbb{Z}_{2^k}$-valued and $\mathbb{Z}_{2^k}$ is a cyclic group under modulo $2^k$ addition, to prove \prettyref{thm:circles}, it suffices to prove the following:
\begin{align}
H \left( \left[X\right]_k \right) = kn \log2 + h\left(X\right)+o_k(1) \label{eq:step1}
\end{align}
for any $\mathbb{T}^n$-valued continuous random variable $X$, and 
\begin{align}
H \left( \left[ \sum_{i=1}^m a_i X_i \right]_k \right) = H \left( \sum_{i=1}^m a_i \left[X_i\right]_k \right) + o_k(1).\label{eq:step2}
\end{align}
Indeed, \eqref{eq:step1} follows from
\begin{align*}
H\left( \left[X\right]_k \right) &=H \left( [\Theta]_k \right)
\stackrel{(a)} = kn \log 2 + h\left(\Theta\right)+o_k(1)
\stackrel{(b)} = kn \log 2 + h \left(X \right)+o_k(1),
\end{align*}
where $(a)$ is by \prettyref{lmm:lmmRenyi} since $\Theta$ is a continuous $[0,1]$-valued random variable and $(b)$ is by \prettyref{lmm:entropyinvar}. To prove \eqref{eq:step2}, for each $i \in [m]$, let $\Theta_i=\phi^{-1}(X_i)$. Define
\begin{align*}
A_k &\triangleq  \left \lfloor 2^k  \sum_{i=1}^m  a_i \Theta_i \right \rfloor \Mod{2^k} , A_k'= \left \lfloor  2^k \sum_{i=1}^m  a_i \Theta_i \right \rfloor, \\
B_k &\triangleq \sum_{i=1}^m a_i \left \lfloor 2^k \Theta_i \right\rfloor \Mod{2^k}, B_k'= \sum_{i=1}^m a_i \left \lfloor 2^k \Theta_i \right\rfloor, \\
Z_k & \triangleq \left \lfloor \sum_{i=1}^m a_i \left\{2^k \Theta_i \right\}  \right \rfloor.
\end{align*}
Our aim is to prove that $H(A_k)-H(B_k)=o_k(1)$. Since $A_k'=B_k'+Z_k$, $A_k=B_k+Z_k \Mod{2^k}$. Also, $H(A_k)-H(B_k)=I(Z_k;A_k)-I(Z_k;B_k)$. Hence,
\begin{align*}
\left|H(A_k)-H(B_k)\right|  \leq I(Z_k;A_k)+I(Z_k;B_k)  \stackrel{(a)}\leq I(Z_k;A_k')+I(Z_k;B_k') \stackrel{(b)} \rightarrow 0 \mbox{ as } k \rightarrow \infty,
\end{align*}
where $(a)$ follows from the data processing inequality and $(b)$ follows from \prettyref{lmm:mutual2} and \prettyref{lmm:mutual1}. This completes the proof.
\end{proof} 

\begin{proof}[Proof of \prettyref{thm:linecircle}]
	The proof is almost identical to that of \prettyref{thm:circles}. By the structure theorem for connected abelian Lie groups (cf.~\eg~\cite[Corollary 1.4.21]{AM07}), $G'$
 is isomorphic to $\mathbb{R}^d \times \mathbb{T}^n$. By \prettyref{lmm:entropyinvar} and \prettyref{lmm:implmm2}, we only need to prove the theorem for $\left[0,1\right]^d \times \mathbb{T}^n$-valued random variables.
Along the lines of the proof of \prettyref{thm:circles}, it suffices to establish the counterparts of \eqref{eq:step1} for any $[0,1]^d \times \mathbb{T}^n$-valued continuous $X$, and \eqref{eq:step2} for any $[0,1]^d \times \mathbb{T}^n$-valued independent and continuous $X_1,\ldots,X_m$, where the quantization operations are defined componentwise by applying the usual uniform quantization \prettyref{eq:quant} to the real-valued components of $X$ and the angular quantization \prettyref{eq:quant-angle} to the $\torus^n$-component of $X$.
The argument is the same as that of \prettyref{thm:circles}, which we omit for conciseness.
\end{proof}

\section*{Acknowledgment}
The authors are grateful to Yury Polyanskiy and Mohamed-Ali Belabbas for discussions pertaining to \prettyref{thm:linecircle} and Mokshay Madiman for bringing \cite{Chan03} to our attention.
The authors thank Adriano Pastore for pointing out a mistake in the previous version and the reference \cite{JWW07}.
This work has been supported in part by NSF grants IIS-14-47879, CCF-14-23088 and CCF-15-27105 and the  Strategic Research Initiative on Big-Data Analytics of the College of Engineering at the University of Illinois.

\appendix
\section{Proof of \prettyref{prop:sharp}}
	\label{app:sharp}
\begin{proof}
The two equalities follows from \prettyref{thm:main} and \prettyref{thm:converse}.
Let $\alpha_n \triangleq \inf_{U \in \integers^n} \frac{H(U-U')-H(U)}{H(U+U')-H(U)}$. Clearly $\alpha_n \leq \alpha_1$ by the tensorization property of Shannon entropy.
On the other hand, given $U \in \integers^n$ and $U'$ its identical copy, using the same argument in the proof of \prettyref{lmm:Embedding}, there exists a linear embedding $f:\integers^n \to \integers$ that preserves the Shannon entropy of $U+U',U-U',U$ and $U'$. Hence
\begin{align*}
\frac{H(U-U')-H(U)}{H(U+U')-H(U)} &= \frac{H(f(U))-f(U'))-H(f(U))}{H(f(U)+f(U'))-H(f(U))}
\end{align*}
and $\alpha_1 \leq \alpha_n$. The result for the supremum follows from the same proof.
\end{proof}


\end{document}